\documentclass[10pt]{article}
\usepackage{}

\usepackage{pgf}
\usepackage[square, comma, sort&compress, numbers]{natbib}
\usepackage[english]{babel}
\usepackage[utf8]{inputenc}
\usepackage{datetime}
\usepackage{geometry}
\geometry{a4paper,scale=0.8}
\usepackage{array}
\usepackage{xspace}

\usepackage{amsmath,amsfonts,amssymb,amsopn,amsthm,amscd}
\theoremstyle{plain}
\usepackage{graphicx} 
\usepackage{epstopdf}
\usepackage{enumerate}
\allowdisplaybreaks[4]

\def\Box{\vcenter{\vbox{\hrule\hbox{\vrule
     \vbox to 8.8pt{\hbox to 10pt{}\vfill}\vrule}\hrule}}}

\usepackage{indentfirst}
\setlength{\parindent}{1em}


   %
       %
           %
        %

\newtheorem{thm}{Theorem}[section]
\newtheorem{lem}[thm]{Lemma}

\newtheorem{cor}[thm]{Corollary}

\newtheorem{prop}[thm]{Proposition}

\newtheorem{remark}{Remark}

\begin{document}

\begin{center}

{\large  \bf Minimal Linear Codes Constructed from partial spreads}

\vskip 0.8cm
{\small Wei Lu$^1$, Xia Wu\footnote{Supported by NSFC (Nos. 11971102, 12171241), the Fundamental Research Funds for the Central Universities.

  MSC: 94B05, 94A62}}$^*$, Xiwang Cao$^2$, Gaojun Luo$^3$ , Xupu Qin$^1$\\

{\small $^1$School of Mathematics, Southeast University, Nanjing
210096, China}\\
{\small $^2$Department of Math, Nanjing University of Aeronautics and Astronautics, Nanjing 211100, China}\\
{\small $^3$School of Physical and Mathematical Sciences, Nanyang Technological University (NTU), Singapore
637371}\\
{\small E-mail:
 luwei1010@seu.edu.cn, wuxia80@seu.edu.cn, xwcao@nuaa.edu.cn, gaojun.luo@ntu.edu.sg,  2779398533@qq.com}\\
{\small $^*$Corresponding author. (Email: wuxia80@seu.edu.cn)}
\vskip 0.8cm
\end{center}

{\bf Abstract:}
Partial spread  is   important in finite geometry and  can  be used to construct linear codes.
 From the results in (Designs, Codes and Cryptography 90:1-15, 2022) by Xia Li, Qin Yue and Deng Tang, we know that if  the number of the elements in a partial spread is  ``big enough", then the corresponding linear code is minimal. They used the sufficient condition in (IEEE Trans. Inf. Theory 44(5): 2010-2017, 1998) to prove the minimality of such linear codes. In this paper, we use the geometric approach to
study the minimality of linear codes constructed from partial spreads in all cases.

{\bf Index Terms:} Linear code, minimal code, partial spread.

\section{\bf Introduction}

 Let $q$ be a prime power and $\mathbb{F}_q$ the finite field with $q$ elements. Let $n$ be a positive integer and $\mathbb{F}_q^n$  the vector space with dimension $n$ over $\mathbb{F}_q$. In this paper, all vector spaces are over $\mathbb{F}_q$ and  all vectors are row vectors.  For a vector $\mathbf{v}=(v_1, \dots, v_n)\in\mathbb{F}_q^n$, let Suppt$(\mathbf{v})$ $:= \{1 \leq i\leq n : v_i\neq 0\}$ be the support of $\mathbf{v}$. The \emph{Hamming weight} of vector of $\mathbf{v}$ is wt$(\mathbf{v})$:=\# $\rm{Suppt}(\mathbf{v})$.  For any two vectors $\mathbf{u}, \mathbf{v}\in \mathbb{F}_q^n$, if $\rm{Suppt}(\mathbf{u})\subseteq \rm{Suppt}(\mathbf{v})$, we say that $\mathbf{v}$ covers $\mathbf{u}$ (or $\mathbf{u}$ is covered by $\mathbf{v}$) and write $\mathbf{u}\preceq\mathbf{v}$. Clearly, $a\mathbf{v}\preceq \mathbf{v}$ for all $a\in \mathbb{F}_q$.

 An $[n,m]_q$ linear code $\mathcal{C}$ over $\mathbb{F}_q$ is a $m$-dimensional subspace of $\mathbb{F}_q^n$. Vectors in $\mathcal{C}$ are called codewords. A codeword $\mathbf{c}$ in a linear code $\mathcal{C}$ is called \emph{minimal} if $\mathbf{c}$ covers only the codewords $a\mathbf{c}$ for all $a\in \mathbb{F}_q$, but no other codewords in $\mathcal{C}$. That is to say, if a codeword $\mathbf{c}$  is minimal in  $\mathcal{C}$, then for any codeword $\mathbf{b}$ in $\mathcal{C}$, $\mathbf{b}\preceq \mathbf{c}$ implies that $\mathbf{b}=a\mathbf{c}$ for some $a\in \mathbb{F}_q$.
  For an arbitrary linear code $\mathcal{C}$, it is  hard  to determine the set of its minimal codewords \cite{BMT1978, BN1990}.

If every codeword in  $\mathcal{C}$ is minimal, then $\mathcal{C}$ is said to be a \emph{minimal linear code}. Minimal linear codes have interesting applications in secret sharing \cite{CDY2005, CCP2014, DY2003, M1995, YD2006} and secure two-party computation \cite{ABCH1995, CMP2013}, and could be decoded with a minimum distance decoding method \cite{AB1998}. Searching for minimal linear codes has been an interesting research topic in coding theory and cryptography.

Up to now, there are two approaches to study minimal linear codes. One is  algebraic method and the other  is geometric method. The algebraic method is based on the Hamming weights of the codewords. 
In \cite{AB1998}, Ashikhmin and Barg gave a sufficient condition  on the minimum and maximum nonzero Hamming weights for a linear code to be minimal.
\begin{lem}\label{Ashikhmin-Barg}{$($\rm {Ashikhmin-Barg} \cite{AB1998}$)$}
A linear code $\mathcal{C}$ over $\mathbb{F}_q$ is minimal if
$$\frac{w_{\rm min}}{w_{\rm max}}>\frac{q-1}{q},$$
where $w_{\rm min}$ and $w_{\rm max}$ denote the minimum and maximum nonzero Hamming weights in the code $\mathcal{C}$, respectively.
\end{lem}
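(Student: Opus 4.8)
The plan is to establish the contrapositive at the level of single codewords: I will show that if $\mathcal{C}$ contains a nonzero codeword that fails to be minimal, then necessarily $w_{\rm min}/w_{\rm max}\le (q-1)/q$. So suppose $\mathbf{c}\in\mathcal{C}$ covers a codeword $\mathbf{b}\in\mathcal{C}$ with $\mathbf{b}\ne a\mathbf{c}$ for every $a\in\mathbb{F}_q$. Then $\mathbf{c}\ne\mathbf{0}$ and $\mathbf{b}\ne\mathbf{0}$ (the zero vector equals $0\cdot\mathbf{v}$ for any $\mathbf{v}$, so it would be a scalar multiple), and $\mathrm{Suppt}(\mathbf{b})\subseteq\mathrm{Suppt}(\mathbf{c})$.

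The key idea is to average the Hamming weight over the $q$ codewords $\mathbf{c}-a\mathbf{b}$, $a\in\mathbb{F}_q$, computing the sum coordinate by coordinate. For a coordinate $i\notin\mathrm{Suppt}(\mathbf{c})$ one has $c_i=b_i=0$, contributing nothing; for $i\in\mathrm{Suppt}(\mathbf{c})\setminus\mathrm{Suppt}(\mathbf{b})$ one has $b_i=0\ne c_i$, so $c_i-ab_i=c_i\ne 0$ for all $q$ scalars $a$, contributing $q$; and for $i\in\mathrm{Suppt}(\mathbf{b})\subseteq\mathrm{Suppt}(\mathbf{c})$ the entry $c_i-ab_i$ vanishes for exactly one scalar, namely $a=c_ib_i^{-1}$, contributing $q-1$. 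Summing over all coordinates gives
\[
\sum_{a\in\mathbb{F}_q}\mathrm{wt}(\mathbf{c}-a\mathbf{b})
=q\bigl(\mathrm{wt}(\mathbf{c})-\mathrm{wt}(\mathbf{b})\bigr)+(q-1)\,\mathrm{wt}(\mathbf{b})
=q\,\mathrm{wt}(\mathbf{c})-\mathrm{wt}(\mathbf{b}).
\]
Now I would separate off the $a=0$ term, which equals $\mathrm{wt}(\mathbf{c})$, from the remaining $q-1$ terms; each of those satisfies $\mathbf{c}-a\mathbf{b}\ne\mathbf{0}$ (otherwise $\mathbf{b}=a^{-1}\mathbf{c}$, contrary to hypothesis), hence $\mathrm{wt}(\mathbf{c}-a\mathbf{b})\ge w_{\rm min}$. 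This yields $q\,\mathrm{wt}(\mathbf{c})-\mathrm{wt}(\mathbf{b})\ge\mathrm{wt}(\mathbf{c})+(q-1)w_{\rm min}$, i.e.\ $(q-1)\mathrm{wt}(\mathbf{c})-\mathrm{wt}(\mathbf{b})\ge(q-1)w_{\rm min}$. Bounding the left-hand side by $\mathrm{wt}(\mathbf{c})\le w_{\rm max}$ and $\mathrm{wt}(\mathbf{b})\ge w_{\rm min}$ gives $(q-1)w_{\rm max}-w_{\rm min}\ge(q-1)w_{\rm min}$, that is $q\,w_{\rm min}\le(q-1)w_{\rm max}$, which is exactly the negation of the hypothesis. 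Hence under the stated inequality every nonzero codeword of $\mathcal{C}$ is minimal (the zero codeword being vacuously so), so $\mathcal{C}$ is a minimal linear code.

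There is no serious obstacle once one hits upon the trick of averaging the weight over the line $\{\mathbf{c}-a\mathbf{b}:a\in\mathbb{F}_q\}$; the only places demanding a little care are the three-case coordinate count — which genuinely uses the covering hypothesis $\mathrm{Suppt}(\mathbf{b})\subseteq\mathrm{Suppt}(\mathbf{c})$ — and the verification that none of the $q-1$ codewords $\mathbf{c}-a\mathbf{b}$ with $a\ne 0$ vanishes, so that the bound $w_{\rm min}$ legitimately applies to each of them.
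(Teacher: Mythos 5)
Your argument is correct. Note that the paper does not prove this lemma at all --- it is quoted as a known result from the cited Ashikhmin--Barg paper --- so there is no in-paper proof to compare against; your averaging of $\mathrm{wt}(\mathbf{c}-a\mathbf{b})$ over the line $\{\mathbf{c}-a\mathbf{b}: a\in\mathbb{F}_q\}$, together with the observation that the covering hypothesis forces exactly one vanishing scalar per coordinate of $\mathrm{Suppt}(\mathbf{b})$ and none per coordinate of $\mathrm{Suppt}(\mathbf{c})\setminus\mathrm{Suppt}(\mathbf{b})$, is precisely the standard proof from the original source, and all the edge cases (nonvanishing of $\mathbf{c}-a\mathbf{b}$ for $a\neq 0$, nonzeroness of $\mathbf{b}$ and $\mathbf{c}$) are handled properly.
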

Cohen et al. \cite{CMP2013} provided an example to show that the condition $\frac{w_{\rm min}}{w_{\rm max}}>\frac{q-1}{q}$  is not necessary for a linear code to be minimal. Ding, Heng and Zhou \cite{DHZ2018, HDZ2018}  generalized this  sufficient condition and derived a sufficient and necessary condition on all Hamming weights for a given linear code to be
minimal.

When using the algebraic method to prove the minimality of a given linear code, one needs to know all the Hamming weights in the code, which is very difficult in general. Even all the
Hamming weights are known, it is also hard to use the algebraic method   to prove the minimality. In this paper, we will show that there exist two linear codes with  the same weight distribution, one is minimal but the other is not.

Recently, minimal linear codes were characterized by geometric approaches in \cite{ABN2019,LW2021,TQLZ2019}. In \cite{ABN2019,TQLZ2019}, the authors used the cutting blocking sets to study the minimal linear codes. In \cite{LW2021}, the authors used the basis of linear space to study the minimal linear codes. Based on these results, it is  easier to construct minimal linear codes or to prove the minimality of some linear codes , see \cite{ABNR2019,BB2019,BB2021,BCMP2021,MB2021,HN}.

There are two basic problems in minimal linear codes. One  is the existence of minimal linear codes with given parameters, see \cite{ABN2019,LW2021,TQLZ2019}. The other is to discuss the minimality of some interesting  codes, see \cite{LYT2022,XQC2020,XQ2019}. In this paper, we will discuss the minimality of linear codes constructed by partial spread.

Partial spread (see section \ref{section Preliminaries}) is   important in finite geometry \cite{KM2017} and can be used to construct bent functions \cite{AM2022}.   Partial spread can also be used to construct linear codes.
Let $s$ be the number of the elements in a  partial spread.  From \cite[Theorem 10]{LYT2022}, we know that if  $s$ is ``big enough", (precisely $s\geq q+1$), then the corresponding linear code is minimal. In \cite[Theorem 10]{LYT2022}, they used the sufficient condition in Lemma \ref{Ashikhmin-Barg} to prove the minimality of such linear codes.

In this paper, we will use the geometric approach to
consider the minimality of linear codes constructed from partial spreads for all $s$. We get the following three results:
  (1) when $s$ is  ``big enough", precisely, when  $s\geq q+1$, for any partial spread,  the corresponding linear code is minimal;
    (2) when $s$ is  ``small enough", precisely, when  $2\leq s\leq 3\leq q$,  for any partial spread, the corresponding linear code is not minimal;
    (3) when $3<s\leq q$, for some  partial spreads,  the corresponding linear codes are minimal, while for some other  partial spreads,  the corresponding linear codes are not  minimal.

The rest of this paper is organized as follows.

 \section{\bf Preliminaries}\label{section Preliminaries}

 \subsection{Partial spread}

Throughout this paper, let $k$ be a positive integer and $m=2k$. A \emph{partial spread}  $\mathcal{S}$ of $\mathbb{F}_q^{m}$  is a set of $k$-dimensional subspaces of  $\mathbb{F}_q^{m},$  which pairwise intersect trivially. It is easy to see that $\#\mathcal{S}\leq q^k+1.$  If $\#\mathcal{S}=q^k+1,$ hence every nonzero element of $\mathbb{F}_q^{m}$ is in exactly one of those subspaces, then $\mathcal{S}$ is called a (complete) spread.

Let $2\leq s\leq q^k+1$ be a positive integer and
\begin{equation}\label{eq1}
\Omega=\{E_i\leq \mathbb{F}_q^m:\ {\rm{dim}}E_i=k,\ E_i\cap E_j=\{\mathbf{0}\},1\leq i\neq j\leq s\}.
\end{equation}
 Then $\Omega$ is a partial spread of $\mathbb{F}_q^{m}$. It is easy to see that for any $1\leq i\neq j\leq s,$

  \begin{equation}\label{sum}
  E_{i}+E_{j}=\mathbb{F}_q^m.
  \end{equation}

 \subsection{Euclidean inner product}
 Let $m$ be a positive integer. For  vectors $\mathbf{x}=(x_1,x_2,...,x_m)$, $\mathbf{y}=(y_1,y_2,...,y_m)\in\ \mathbb{F}_q^m$, their \emph{Euclidean inner product} is:
$$<\mathbf{x},\mathbf{y}>:=\mathbf{x}\mathbf{y}^T=\sum_{i=1}^m{x}_i{y}_i.$$
For any $S\subseteq\mathbb{F}_q^m $, we define
$${\rm{Span}}(S):=\{\sum_{i=1}^r{\lambda}_i{\mathbf{s}}_i\  |\  r\in \mathbb{N}, {\mathbf{s}}_i\in S, \lambda_i\in\mathbb{F}_q\},$$
$$S^\perp:=\{\mathbf{v}\in \mathbb{F}_q^m\ |\ \mathbf{vs}^T=0,\  {\rm{for\  any}}\  \mathbf{s}\in S\}.$$
Then Span$(S)$ and $S^\perp$ are vector spaces over $\mathbb{F}_q$ and
\begin{equation}\label{dim sum}
{\rm{dim}({\rm{Span}}}(S))+{\rm{dim}}(S^\perp)=m.
\end{equation}


 \subsection{Minimal linear codes}

All linear codes can be constructed by the following way. Let $m\leq n$ be two positive integers. Let $D:=\{\mathbf{d}_1,...,\mathbf{d}_n\}$  be a multiset, where $\mathbf{d}_1,...,\mathbf{d}_n\in \mathbb{F}_q^m$.   Let  $r(D)$ be the rank of $D$ (it equals  the dimension of the vector space Span$(D)$ over $\mathbb{F}_q$). Let

 $$\mathcal{C}=\mathcal{C}(D)=\{{\mathbf{c}\mathbf{(x)}}=\mathbf{c}(\mathbf{x};D)=(\mathbf{xd}_1^{T},...,\mathbf{xd}_n^{T}), \mathbf{x}\in \mathbb{F}_q^m\}.$$
Then $\mathcal{C}(D)$ is an $[n$, $r(D)]_q$ linear code. We always study the minimality of $\mathcal{C}(D)$ by considering some appropriate  multisets $D$.

To present the  sufficient and necessary condition for minimal linear codes in \cite{LW2021}, some concepts are needed.

For any $\mathbf{y}\in \mathbb{F}_q^m$, we define
$$H(\mathbf{y}):=\mathbf{y}^\perp=\{\mathbf{x}\in \mathbb{F}_q^m\mid\mathbf{xy}^{T}=0\},$$
$$H(\mathbf{y},D):=D\cap H(\mathbf{y})=\{\mathbf{x}\in D\mid\mathbf{xy}^{T}=0\},$$
$$V(\mathbf{y},D):={\rm{Span}}(H(\mathbf{y},D)).$$
It is obvious that $H(\mathbf{y},D)\subseteq V(\mathbf{y},D)\subseteq H(\mathbf{y})$.

\begin{prop}\label{21}\cite{LW2021}
For any $\mathbf{x}, \mathbf{y}\in \mathbb{F}_q^m,\ \mathbf{c(x)}\preceq \mathbf{c(y)}$ if and only if $H(\mathbf{y},D)\subseteq H(\mathbf{x},D)$.
\end{prop}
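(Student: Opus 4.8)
The statement to prove is Proposition~\ref{21}: for $\mathbf{x},\mathbf{y}\in\mathbb{F}_q^m$, we have $\mathbf{c(x)}\preceq\mathbf{c(y)}$ if and only if $H(\mathbf{y},D)\subseteq H(\mathbf{x},D)$. The plan is to unwind both sides of the equivalence directly from the definitions and reduce everything to a statement about the coordinates $\mathbf{xd}_i^T$ and $\mathbf{yd}_i^T$. Recall that $\mathbf{c(x)}=(\mathbf{xd}_1^T,\dots,\mathbf{xd}_n^T)$, so by definition of the support, $i\in\mathrm{Suppt}(\mathbf{c(x)})$ exactly when $\mathbf{xd}_i^T\neq 0$, i.e.\ when $\mathbf{d}_i\notin H(\mathbf{x})$, i.e.\ when $\mathbf{d}_i\notin H(\mathbf{x},D)$ (since $\mathbf{d}_i\in D$ automatically). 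Equivalently, $i\notin\mathrm{Suppt}(\mathbf{c(x)})$ iff $\mathbf{d}_i\in H(\mathbf{x},D)$. So the complement of the support of $\mathbf{c(x)}$ in $\{1,\dots,n\}$ records precisely the index set of those $\mathbf{d}_i$ lying in $H(\mathbf{x},D)$.

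The next step is to translate the covering relation. By definition $\mathbf{c(x)}\preceq\mathbf{c(y)}$ means $\mathrm{Suppt}(\mathbf{c(x)})\subseteq\mathrm{Suppt}(\mathbf{c(y)})$, which (passing to complements) is equivalent to: for every $i$, if $\mathbf{d}_i\in H(\mathbf{y},D)$ then $\mathbf{d}_i\in H(\mathbf{x},D)$. Thus the condition becomes a statement about the multiset $D$: every element of $D$ that lies in $H(\mathbf{y})$ also lies in $H(\mathbf{x})$. One has to be slightly careful because $D$ is a multiset, but this causes no trouble here — the support is determined index-by-index, and the condition ``$\mathbf{d}_i\in H(\mathbf{y},D)\Rightarrow\mathbf{d}_i\in H(\mathbf{x},D)$'' depends only on the value $\mathbf{d}_i$, so it holds for all $i$ iff it holds for the underlying set of values, which is exactly $H(\mathbf{y},D)\subseteq H(\mathbf{x},D)$ as sets. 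This last observation — that the index-wise inclusion of supports collapses to the set inclusion $H(\mathbf{y},D)\subseteq H(\mathbf{x},D)$ — is the only point where one must be mildly attentive, and it is the ``obstacle,'' though a very minor one.

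So the argument has essentially three lines: (1) identify $\{i : \mathbf{d}_i\notin\mathrm{Suppt}(\mathbf{c(x)})\}$ with the indices of elements of $H(\mathbf{x},D)$; (2) rewrite $\mathrm{Suppt}(\mathbf{c(x)})\subseteq\mathrm{Suppt}(\mathbf{c(y)})$ as the reverse inclusion of these complement index sets; (3) observe this reverse inclusion is equivalent to $H(\mathbf{y},D)\subseteq H(\mathbf{x},D)$. Both directions of the ``if and only if'' come out simultaneously since every step is an equivalence. No deeper structure (dimension formula \eqref{dim sum}, properties of partial spreads, etc.) is needed for this particular proposition; it is purely a bookkeeping translation between the coordinate description of codewords and the hyperplane/multiset description, and I would write it as a short chain of iff's.
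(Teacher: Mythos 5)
Your proof is correct: the paper states Proposition~\ref{21} as a cited result from the reference and gives no proof of its own, and your definitional unwinding (support of $\mathbf{c(x)}$ $\leftrightarrow$ complement of the index set of $H(\mathbf{x},D)$, then passing to complements in the support inclusion) is exactly the standard argument, with the multiset point correctly handled since membership in $H(\mathbf{y},D)$ depends only on the value $\mathbf{d}_i$. Nothing further is needed.
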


Let $\mathbf{y}\in \mathbb{F}_q^m\backslash \{\mathbf{0}\}$. The following lemma   gives a sufficient and necessary condition for the codeword $\mathbf{c(y)}\in \mathcal{C}(D)$ to be minimal.

\begin{lem}\label{sn}\cite[Theorem 3.1]{LW2021}
Let $\mathbf{y}\in \mathbb{F}_q^m\backslash \{\mathbf{0}\}$. Then the following three conditions are equivalent:\\
$(1)$ $\mathbf{c(y)}$ is minimal in $\mathcal{C}(D)$;\\
$(2)$ \rm{dim}$V(\mathbf{y},D)=m-1$;\\
$(3)$ $V(\mathbf{y},D)=H(\mathbf{y})$.
\end{lem}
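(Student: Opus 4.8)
The plan is to establish $(2)\Leftrightarrow(3)$ by a dimension count and $(1)\Leftrightarrow(3)$ by translating the covering order, via Proposition \ref{21}, into a statement about orthogonal complements for the Euclidean form on $\mathbb{F}_q^m$. The linear-algebra facts I will use are the standard ones: for a subspace $W\subseteq\mathbb{F}_q^m$ one has $\dim W+\dim W^{\perp}=m$ and $(W^{\perp})^{\perp}=W$ (cf. \eqref{dim sum}), and $W\subseteq\mathbf{x}^{\perp}$ is equivalent to $\mathbf{x}\in W^{\perp}$. I take $D$ to span $\mathbb{F}_q^m$ (equivalently $r(D)=m$), which is the case relevant to this paper; then $\mathbf{x}\mapsto\mathbf{c}(\mathbf{x})$ is injective, so $\mathbf{c}(\mathbf{x})=a\,\mathbf{c}(\mathbf{y})$ if and only if $\mathbf{x}=a\mathbf{y}$. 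With this in hand $(2)\Leftrightarrow(3)$ is immediate: since $\mathbf{y}\neq\mathbf{0}$ we have $\dim H(\mathbf{y})=m-1$, and because $V(\mathbf{y},D)\subseteq H(\mathbf{y})$, the subspace $V(\mathbf{y},D)$ has dimension $m-1$ exactly when it equals $H(\mathbf{y})$.

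For $(1)\Leftrightarrow(3)$, the main step is to identify the set of $\mathbf{x}$ whose codeword is covered by $\mathbf{c}(\mathbf{y})$. By Proposition \ref{21}, $\mathbf{c}(\mathbf{x})\preceq\mathbf{c}(\mathbf{y})$ iff $H(\mathbf{y},D)\subseteq H(\mathbf{x},D)=D\cap\mathbf{x}^{\perp}$; since $H(\mathbf{y},D)\subseteq D$, this is equivalent to $H(\mathbf{y},D)\subseteq\mathbf{x}^{\perp}$, hence to $V(\mathbf{y},D)=\mathrm{Span}(H(\mathbf{y},D))\subseteq\mathbf{x}^{\perp}$, i.e. to $\mathbf{x}\in V(\mathbf{y},D)^{\perp}$. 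Thus $\{\mathbf{x}\in\mathbb{F}_q^m:\mathbf{c}(\mathbf{x})\preceq\mathbf{c}(\mathbf{y})\}=V(\mathbf{y},D)^{\perp}$, while the inclusion $H(\mathbf{y},D)\subseteq H(\mathbf{y})=\mathbf{y}^{\perp}$ gives $\mathbf{y}\in V(\mathbf{y},D)^{\perp}$, so $\mathbb{F}_q\mathbf{y}\subseteq V(\mathbf{y},D)^{\perp}$ always. Now $\mathbf{c}(\mathbf{y})$ is minimal iff every $\mathbf{x}$ with $\mathbf{c}(\mathbf{x})\preceq\mathbf{c}(\mathbf{y})$ satisfies $\mathbf{c}(\mathbf{x})\in\mathbb{F}_q\mathbf{c}(\mathbf{y})$, i.e. $\mathbf{x}\in\mathbb{F}_q\mathbf{y}$ by injectivity; together with the previous sentence this means $V(\mathbf{y},D)^{\perp}=\mathbb{F}_q\mathbf{y}$, and taking orthogonal complements, $V(\mathbf{y},D)=(\mathbb{F}_q\mathbf{y})^{\perp}=H(\mathbf{y})$, which is $(3)$.

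I do not anticipate a real obstacle: once Proposition \ref{21} is available the argument is just a dictionary between the combinatorial covering order and orthogonality. The one point requiring care is the standing assumption that $D$ spans $\mathbb{F}_q^m$. If it does not, then $\mathbf{x}\mapsto\mathbf{c}(\mathbf{x})$ has nonzero kernel $\mathrm{Span}(D)^{\perp}$, and the same computation yields only that $\mathbf{c}(\mathbf{y})$ is minimal iff $V(\mathbf{y},D)^{\perp}=\mathbb{F}_q\mathbf{y}+\mathrm{Span}(D)^{\perp}$, which is not equivalent to $(3)$ in general (for example $\mathbf{c}(\mathbf{y})$ can be the all-zero codeword, hence vacuously minimal, even though $V(\mathbf{y},D)\subsetneq H(\mathbf{y})$). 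Since every multiset $D$ arising from a partial spread spans $\mathbb{F}_q^m$, this restriction costs nothing here.
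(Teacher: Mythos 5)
Your argument is correct: the chain $\mathbf{c}(\mathbf{x})\preceq\mathbf{c}(\mathbf{y})\Leftrightarrow H(\mathbf{y},D)\subseteq\mathbf{x}^{\perp}\Leftrightarrow\mathbf{x}\in V(\mathbf{y},D)^{\perp}$ is valid, and combining it with the injectivity of $\mathbf{x}\mapsto\mathbf{c}(\mathbf{x})$ (which needs $r(D)=m$, a hypothesis you rightly flag and which holds for the partial-spread sets $D$ used here) gives exactly the equivalence of $(1)$ and $(3)$, while $(2)\Leftrightarrow(3)$ is the obvious dimension count inside $H(\mathbf{y})$. Note that this paper offers no proof of the lemma --- it is quoted from \cite[Theorem 3.1]{LW2021} --- so there is nothing in-paper to compare against; your derivation is the natural one from Proposition \ref{21} and the duality $\dim W+\dim W^{\perp}=m$, $(W^{\perp})^{\perp}=W$, and can stand as a self-contained proof.
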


The following lemma gives a sufficient and necessary condition for linear codes over $\mathbb{F}_q$ to be minimal.
\begin{lem}\label{lem23}\cite[Theorem 3.2]{LW2021}
The following three conditions are equivalent:\\
$(1)$ $\mathcal{C}(D)$ is minimal;\\
$(2)$ for any $\mathbf{y}\in \mathbb{F}_q^m\backslash \{\mathbf{0}\}$,  $\rm{dim}$$V(\mathbf{y},D)=m-1$;\\
$(3)$ for any $\mathbf{y}\in \mathbb{F}_q^m\backslash \{\mathbf{0}\}$, $V(\mathbf{y},D)=H(\mathbf{y})$.

\end{lem}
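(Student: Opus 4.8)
The plan is to obtain Lemma~\ref{lem23} as an immediate consequence of Lemma~\ref{sn}, simply by quantifying the latter over all nonzero vectors $\mathbf{y}$. Throughout I use the standing convention $r(D)=m$, i.e.\ $\mathrm{Span}(D)=\mathbb{F}_q^m$; this holds in every situation treated in this paper, and without it the evaluation map below need not be injective and the statement would need a routine adjustment (essentially replacing $m$ by $r(D)$).

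First I would record the elementary fact that the evaluation map $\varphi\colon\mathbb{F}_q^m\to\mathcal{C}(D)$, $\mathbf{x}\mapsto\mathbf{c}(\mathbf{x})$, is $\mathbb{F}_q$-linear and surjective by the very definition of $\mathcal{C}(D)$, and that it is injective precisely because $r(D)=m$; hence $\varphi$ is a bijection and $\mathbf{c}(\mathbf{x})=\mathbf{0}$ if and only if $\mathbf{x}=\mathbf{0}$. Consequently the nonzero codewords of $\mathcal{C}(D)$ are exactly the vectors $\mathbf{c}(\mathbf{y})$ with $\mathbf{y}\in\mathbb{F}_q^m\setminus\{\mathbf{0}\}$, each arising from a unique such $\mathbf{y}$.

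For $(1)\Leftrightarrow(2)$: by definition $\mathcal{C}(D)$ is minimal iff every one of its codewords is minimal; the zero codeword covers only itself and so is automatically minimal, whence $\mathcal{C}(D)$ is minimal iff $\mathbf{c}(\mathbf{y})$ is minimal for every $\mathbf{y}\in\mathbb{F}_q^m\setminus\{\mathbf{0}\}$ (using the previous paragraph). Applying Lemma~\ref{sn} to each such $\mathbf{y}$ rewrites ``$\mathbf{c}(\mathbf{y})$ is minimal'' as ``$\dim V(\mathbf{y},D)=m-1$'', and taking the conjunction over all nonzero $\mathbf{y}$ yields exactly $(1)\Leftrightarrow(2)$. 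For $(2)\Leftrightarrow(3)$ one can either quote the equivalence of (2) and (3) in Lemma~\ref{sn} for each fixed $\mathbf{y}$, or argue directly from the inclusion $V(\mathbf{y},D)\subseteq H(\mathbf{y})$ together with $\dim H(\mathbf{y})=m-1$ for $\mathbf{y}\neq\mathbf{0}$, so that $\dim V(\mathbf{y},D)=m-1$ holds iff $V(\mathbf{y},D)=H(\mathbf{y})$; conjoining over all nonzero $\mathbf{y}$ completes the cycle.

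I do not expect a genuine obstacle here: the one point that needs care is the bookkeeping that makes ``every codeword of $\mathcal{C}(D)$ is minimal'' equivalent to ``$\mathbf{c}(\mathbf{y})$ is minimal for every $\mathbf{y}\in\mathbb{F}_q^m\setminus\{\mathbf{0}\}$'' — that is, the bijectivity of $\varphi$ (so that no nonzero codeword is missed and none is represented by two different $\mathbf{y}$), and the harmlessness of the zero codeword. Once that is pinned down, Lemma~\ref{lem23} is a purely formal restatement of Lemma~\ref{sn}.
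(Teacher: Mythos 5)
Your derivation is correct: Lemma~\ref{lem23} does follow from Lemma~\ref{sn} simply by quantifying over all nonzero $\mathbf{y}$, and the two points you isolate (the bijectivity of $\mathbf{x}\mapsto\mathbf{c}(\mathbf{x})$ under the hypothesis $r(D)=m$, which indeed holds throughout the paper since $E_1+E_2=\mathbb{F}_q^m$, and the harmlessness of the zero codeword) are exactly the bookkeeping needed. Note, however, that the paper itself offers no proof of this statement --- it is quoted directly from \cite[Theorem 3.2]{LW2021} alongside Lemma~\ref{sn} --- so there is no in-paper argument to compare against; your reduction is the standard one.
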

By the following lemma, we can get infinity many minimal linear codes from any known minimal linear codes.
\begin{lem}\label{lem24}\cite[Proposition 4.1.]{LW2021}
Let $D_1\subseteq D_2$ be two multisets with elements in $ \mathbb{F}_q^m$ and $r(D_1)=r(D_2)=m$. If $\mathcal{C}(D_1)$ is minimal, then $\mathcal{C}(D_2)$ is minimal.
\end{lem}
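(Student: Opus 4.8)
The plan is to reduce everything to the dimension criterion of Lemma~\ref{lem23}: I will show that if $\dim V(\mathbf{y},D_1)=m-1$ for every nonzero $\mathbf{y}$, then the same equality holds with $D_1$ replaced by the larger multiset $D_2$. The only mechanism needed is that $\mathbf{y}\mapsto V(\mathbf{y},\cdot)$ is monotone under inclusion of multisets, while it is always trapped inside the hyperplane $H(\mathbf{y})$; since $H(\mathbf{y})$ already has dimension $m-1$ when $\mathbf{y}\neq\mathbf{0}$, there is no room left for $V(\mathbf{y},D_2)$ to be anything other than $(m-1)$-dimensional.

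Concretely, I would proceed as follows. Fix $\mathbf{y}\in\mathbb{F}_q^m\setminus\{\mathbf{0}\}$. Because $D_1\subseteq D_2$,
$$H(\mathbf{y},D_1)=D_1\cap H(\mathbf{y})\ \subseteq\ D_2\cap H(\mathbf{y})=H(\mathbf{y},D_2),$$
and applying $\mathrm{Span}$ to both sides yields $V(\mathbf{y},D_1)\subseteq V(\mathbf{y},D_2)$. Next, every vector of $H(\mathbf{y},D_2)$ lies in the subspace $H(\mathbf{y})$, so $V(\mathbf{y},D_2)\subseteq H(\mathbf{y})$, and by \eqref{dim sum} (with $S=\{\mathbf{y}\}$) we have $\dim H(\mathbf{y})=m-1$. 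Now use the hypothesis: $\mathcal{C}(D_1)$ is minimal, so Lemma~\ref{lem23} gives $\dim V(\mathbf{y},D_1)=m-1$. Stacking the two inclusions,
$$m-1=\dim V(\mathbf{y},D_1)\ \le\ \dim V(\mathbf{y},D_2)\ \le\ \dim H(\mathbf{y})=m-1,$$
so $\dim V(\mathbf{y},D_2)=m-1$. Since $\mathbf{y}$ was an arbitrary nonzero vector, the implication $(2)\Rightarrow(1)$ of Lemma~\ref{lem23} shows $\mathcal{C}(D_2)$ is minimal. The hypotheses $r(D_1)=r(D_2)=m$ just ensure that $\mathcal{C}(D_1)$ and $\mathcal{C}(D_2)$ are genuine $[\,\cdot\,,m]_q$ codes so that Lemma~\ref{lem23} applies in the intended regime; full rank of $D_1$ is what makes $\dim V(\mathbf{y},D_1)=m-1$ attainable in the first place.

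Honestly there is no serious obstacle here — it is a two-line sandwich once one notices the monotonicity $V(\mathbf{y},D_1)\subseteq V(\mathbf{y},D_2)\subseteq H(\mathbf{y})$. If one prefers to avoid dimension counts, an equally short route goes through Proposition~\ref{21}: given codewords with $\mathbf{c}(\mathbf{x};D_2)\preceq\mathbf{c}(\mathbf{y};D_2)$, Proposition~\ref{21} gives $H(\mathbf{y},D_2)\subseteq H(\mathbf{x},D_2)$; intersecting with $D_1$ gives $H(\mathbf{y},D_1)\subseteq H(\mathbf{x},D_1)$, i.e. $\mathbf{c}(\mathbf{x};D_1)\preceq\mathbf{c}(\mathbf{y};D_1)$; minimality of $\mathcal{C}(D_1)$ forces $\mathbf{x}-a\mathbf{y}\in D_1^{\perp}$ for some $a\in\mathbb{F}_q$, and $r(D_1)=m$ makes $D_1^{\perp}=\{\mathbf{0}\}$, so $\mathbf{x}=a\mathbf{y}$ and hence $\mathbf{c}(\mathbf{x};D_2)=a\,\mathbf{c}(\mathbf{y};D_2)$, which is exactly the minimality of $\mathbf{c}(\mathbf{y};D_2)$.
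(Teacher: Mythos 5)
Your proof is correct. Note that the paper itself gives no proof of this lemma --- it is quoted verbatim from \cite[Proposition 4.1]{LW2021} --- so there is nothing internal to compare against; your sandwich argument $m-1=\dim V(\mathbf{y},D_1)\le\dim V(\mathbf{y},D_2)\le\dim H(\mathbf{y})=m-1$ is exactly the natural deduction from the monotonicity of $V(\mathbf{y},\cdot)$ together with Lemma~\ref{lem23}, and your alternative route through Proposition~\ref{21} (using $r(D_1)=m$ to force $\mathrm{Span}(D_1)^{\perp}=\{\mathbf{0}\}$) is also sound. Either version would serve as a complete proof of the cited result within the framework of this paper.
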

The following corollary is trivial.
\begin{cor}\label{cor21}
Let $D_1\subseteq D_2$ be two multisets with elements in $ \mathbb{F}_q^m$ and $r(D_1)=r(D_2)=m$. If $\mathcal{C}(D_2)$ is not minimal, then
$\mathcal{C}(D_1)$ is not minimal.
\end{cor}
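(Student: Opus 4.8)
The plan is to observe that Corollary \ref{cor21} is nothing more than the contrapositive of Lemma \ref{lem24}, so the entire argument is a one-line logical transposition. Concretely, I would argue by contradiction: assume the hypotheses $D_1\subseteq D_2$, $r(D_1)=r(D_2)=m$, and that $\mathcal{C}(D_2)$ is not minimal, but suppose for contradiction that $\mathcal{C}(D_1)$ \emph{is} minimal.

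Then I would invoke Lemma \ref{lem24} directly: since $D_1\subseteq D_2$, both have full rank $m$, and $\mathcal{C}(D_1)$ is minimal, the lemma forces $\mathcal{C}(D_2)$ to be minimal as well. This contradicts the standing assumption that $\mathcal{C}(D_2)$ is not minimal. Hence $\mathcal{C}(D_1)$ cannot be minimal, which is exactly the claim.

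There is no genuine obstacle here; the only thing to be careful about is that the rank hypothesis $r(D_1)=r(D_2)=m$ (needed so that both $\mathcal{C}(D_1)$ and $\mathcal{C}(D_2)$ have the ambient dimension $m$, making the notion of minimality via Lemma \ref{lem23} applicable to each) is carried over verbatim from Lemma \ref{lem24}, and it is — so the contrapositive goes through with no extra work. If one preferred a direct (non-contradiction) phrasing, one could instead just say: by Lemma \ref{lem24}, ``$\mathcal{C}(D_1)$ minimal $\Rightarrow$ $\mathcal{C}(D_2)$ minimal''; negating both sides gives ``$\mathcal{C}(D_2)$ not minimal $\Rightarrow$ $\mathcal{C}(D_1)$ not minimal,'' which is the statement. $\Box$
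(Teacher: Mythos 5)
Your proof is correct and matches the paper's intent: the paper states this corollary without proof, calling it trivial, precisely because it is the contrapositive of Lemma \ref{lem24}, which is exactly the argument you give. Nothing further is needed.
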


In the following section, we will use the above lemmas to consider the minimality of linear codes constructed from partial spreads.

\section{\bf The minimality of  linear codes constructed from partial spreads}\label{section main theorem}

In this section, we consider the linear codes constructed from partial spreads and discuss the  minimality of these linear codes.

Let $k$ be a positive integer and $m=2k$, $2\leq s\leq q^k+1$ be a positive integer and
\begin{equation*}
\Omega=\{E_i\leq \mathbb{F}_q^m:\ {\rm{dim}}E_i=k,E_i\cap E_j=\{\mathbf{0}\},1\leq i\neq j\leq s\}
\end{equation*}
 be a partial spread of $\mathbb{F}_q^{m}$. Let
 \begin{equation}\label{eq2}
D=(\bigcup\limits^s\limits_{i=1}E_i)\backslash \{\mathbf{0}\}=\bigcup\limits^s\limits_{i=1}(E_i\backslash \{\mathbf{0}\}).
\end{equation}
It is easy that $\mathcal{C}(D)$ is a $[s(q^{k}-1),m]_q$ linear code.

The following lemma is important in the proofs of this section.
\begin{lem}\label{lem31}\
For all $\mathbf{y}\in \mathbb{F}_q^m\backslash\{\mathbf{0}\}$, $E_i\in \Omega$,   we have $H(\mathbf{y},E_i)=V(\mathbf{y},E_i)$ and
$$
 {\rm{dim}}V(\mathbf{y},E_i)=\left\{
            \begin{aligned}
             &k, & if\quad \mathbf{y}\in {E_i}^\perp;\\
             &k-1, & if\quad \mathbf{y}\notin {E_i}^\perp.\\
            \end{aligned}
          \right.
$$
\end{lem}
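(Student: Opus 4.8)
The plan is to compute the intersection $H(\mathbf{y},E_i)=E_i\cap H(\mathbf{y})$ as a subspace of the $k$-dimensional space $E_i$, and then observe that for a single subspace the span adds nothing. First I would note that $H(\mathbf{y},E_i)$ is by definition a linear subspace of $E_i$ (it is the set of $\mathbf{x}\in E_i$ with $\mathbf{x}\mathbf{y}^T=0$), so it already equals its own span; hence $H(\mathbf{y},E_i)=V(\mathbf{y},E_i)$ is immediate — the content of the lemma is really the dimension count. For the dimension, restrict the linear functional $\varphi:\mathbf{x}\mapsto\mathbf{x}\mathbf{y}^T$ to $E_i$; then $H(\mathbf{y},E_i)=\Ker(\varphi|_{E_i})$, so $\dim H(\mathbf{y},E_i)$ is either $k$ (if $\varphi|_{E_i}\equiv 0$) or $k-1$ (if $\varphi|_{E_i}\ne 0$), since the image of $\varphi|_{E_i}$ lies in the one-dimensional space $\mathbb{F}_q$.

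The remaining point is to identify the case $\varphi|_{E_i}\equiv 0$ with the condition $\mathbf{y}\in E_i^\perp$. By definition $\varphi|_{E_i}\equiv 0$ means $\mathbf{x}\mathbf{y}^T=0$ for every $\mathbf{x}\in E_i$, which is exactly the statement $\mathbf{y}\in E_i^\perp$ (recall $E_i^\perp=\{\mathbf{v}:\mathbf{v}\mathbf{x}^T=0\ \forall\,\mathbf{x}\in E_i\}$ and the inner product is symmetric). So $\dim V(\mathbf{y},E_i)=k$ when $\mathbf{y}\in E_i^\perp$, and otherwise $\varphi|_{E_i}$ is a nonzero functional on $E_i$, giving $\dim V(\mathbf{y},E_i)=\dim\Ker(\varphi|_{E_i})=k-1$. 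Here one uses $\dim(\mathrm{Span}(E_i))=k$ together with \eqref{dim sum}, or simply rank–nullity for $\varphi|_{E_i}$.

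I do not expect any real obstacle: the statement is essentially rank–nullity applied to one linear form on one $k$-dimensional subspace, plus unwinding the definition of $E_i^\perp$. The only thing to be slightly careful about is the distinction between $H(\mathbf{y},E_i)$ (the intersection $E_i\cap H(\mathbf{y})$, which here plays the role of $H(\mathbf{y},D)$ with the multiset $D$ replaced by the subspace $E_i$) and its span $V(\mathbf{y},E_i)$ — but since $E_i\cap H(\mathbf{y})$ is already a subspace, these coincide, which is why the equality $H(\mathbf{y},E_i)=V(\mathbf{y},E_i)$ holds trivially. This lemma will presumably be combined with \eqref{sum} and an inclusion–exclusion-type argument over the $s$ subspaces to control $\dim V(\mathbf{y},D)$ for the full multiset $D$ in \eqref{eq2}.
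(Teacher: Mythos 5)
Your proposal is correct and is essentially the paper's argument: both reduce to the observation that $H(\mathbf{y},E_i)$ is already a subspace of $E_i$ and then count its dimension, the only cosmetic difference being that you apply rank--nullity to the functional $\mathbf{x}\mapsto\mathbf{x}\mathbf{y}^T$ restricted to $E_i$, while the paper realizes $V(\mathbf{y},E_i)$ as the solution space of the $(k+1)\times m$ system obtained by adjoining $\mathbf{y}$ to a basis of $E_i^\perp$ and computes $m-r(A)$. Your version is, if anything, slightly more direct since it avoids passing through $E_i^\perp$.
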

\begin{proof}
Since $E_i$ is a subspace, we have $H(\mathbf{y},E_i)=V(\mathbf{y},E_i)$. Since ${\rm{dim}}E_i=k$, by (\ref{dim sum}), we get ${\rm{dim}}{E_i}^\perp=k$. Let $\alpha_1,\cdots,\alpha_k$ be a basis of ${E_i}^\perp$, $A=(\alpha_1,\cdots,\alpha_k,\mathbf{y})^T$ be a $(k+1)\times m$ matrix. Note that $V(\mathbf{y},E_i)$ is the solution space
of $A\mathbf{x}^T=\mathbf{0}$, where $\mathbf{x}=(x_1,\cdots,x_m)\in \mathbb{F}_q^m$. Thus ${\rm{dim}}V(\mathbf{y},E_i)=m-r(A)$. So if $\mathbf{y}\in {E_i}^\perp$, then ${\rm{dim}}V(\mathbf{y},E_i)=k$; if $\mathbf{y}\notin {E_i}^\perp$, then
${\rm{dim}}V(\mathbf{y},E_i)=k-1$. The proof is completed.
\end{proof}
The following lemma is trivial.
\begin{lem}\label{lem32}\
Let $\Omega=\{E_1,...,E_s\}$ be a partial spread of $\mathbb{F}_q^{m}$. Define $\Omega^\perp:=\{E_1^\perp,...,E_s^\perp\}$. Then $\Omega^\perp$ is also
 a partial spread of $\mathbb{F}_q^{m}$. That is to say, for any $1\leq i\neq j \leq s$, ${\rm{dim}}{E_i}^\perp=k$, and $E_i^\perp\cap E_j^\perp=\{\mathbf{0}\}$.
\end{lem}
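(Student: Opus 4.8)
The plan is to verify the two assertions of the lemma separately, both being immediate consequences of the dimension formula (\ref{dim sum}) together with the relation (\ref{sum}). First, since $\dim E_i=k$ and $m=2k$, formula (\ref{dim sum}) gives at once $\dim E_i^\perp=m-\dim E_i=2k-k=k$ for every $i$, so each member of $\Omega^\perp$ is a $k$-dimensional subspace.

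For the trivial-intersection property, fix $1\le i\ne j\le s$. By (\ref{sum}) we have $E_i+E_j=\mathbb{F}_q^m$. I would then invoke the standard orthogonality identities for the (non-degenerate) Euclidean inner product on $\mathbb{F}_q^m$, namely $(A+B)^\perp=A^\perp\cap B^\perp$ and $(A^\perp)^\perp=A$ for subspaces $A,B\le\mathbb{F}_q^m$. Applying the first with $A=E_i$ and $B=E_j$ yields $E_i^\perp\cap E_j^\perp=(E_i+E_j)^\perp=(\mathbb{F}_q^m)^\perp=\{\mathbf{0}\}$, which is exactly the claim. This shows $\Omega^\perp$ is a partial spread.

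If one wishes to avoid citing those identities as black boxes, an equivalent route is a direct count: the dimension formula gives $\dim(E_i^\perp\cap E_j^\perp)=\dim E_i^\perp+\dim E_j^\perp-\dim(E_i^\perp+E_j^\perp)\ge 2k-m=0$, so it suffices to check $E_i^\perp+E_j^\perp=\mathbb{F}_q^m$; and any $\mathbf{v}\in(E_i^\perp+E_j^\perp)^\perp$ is orthogonal to both $E_i^\perp$ and $E_j^\perp$, hence lies in $(E_i^\perp)^\perp\cap(E_j^\perp)^\perp=E_i\cap E_j=\{\mathbf{0}\}$ by hypothesis, so $(E_i^\perp+E_j^\perp)^\perp=\{\mathbf{0}\}$ and therefore $E_i^\perp+E_j^\perp=\mathbb{F}_q^m$ by (\ref{dim sum}) once more. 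There is essentially no obstacle in this lemma; the only point that deserves a word is that the Euclidean form on $\mathbb{F}_q^m$ is non-degenerate, which is what legitimizes $(S^\perp)^\perp=S$ and the bookkeeping in (\ref{dim sum}) — this is already used freely elsewhere in the paper, so I would simply take it for granted.
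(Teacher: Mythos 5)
Your proof is correct; the paper itself gives no argument (it simply declares the lemma trivial), and your derivation via $\dim E_i^\perp=m-\dim E_i=k$ and $E_i^\perp\cap E_j^\perp=(E_i+E_j)^\perp=(\mathbb{F}_q^m)^\perp=\{\mathbf{0}\}$ is exactly the standard reasoning the authors are implicitly relying on. Nothing further is needed.
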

Now we consider the minimality of $\mathcal{C}(D)$ in three cases. First, when $s\geq q+1$, we have

\begin{thm}\label{thm31}
Let  $\Omega=\{E_1,...,E_s\}$ be a partial spread of $\mathbb{F}_q^{m}$. If $s\geq q+1$, then $\mathcal{C}(D)$ is a $[s(q^{k}-1),m]_q$ minimal linear code.
\end{thm}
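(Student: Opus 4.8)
The plan is to apply Lemma~\ref{lem23}: it suffices to show that for every $\mathbf{y}\in\mathbb{F}_q^m\backslash\{\mathbf{0}\}$ we have $\dim V(\mathbf{y},D)=m-1$. Since $D=\bigcup_{i=1}^s(E_i\backslash\{\mathbf{0}\})$, we have $H(\mathbf{y},D)=\bigcup_{i=1}^s H(\mathbf{y},E_i)$, hence $V(\mathbf{y},D)=\mathrm{Span}\big(\bigcup_{i=1}^s V(\mathbf{y},E_i)\big)=\sum_{i=1}^s V(\mathbf{y},E_i)$. We always have $V(\mathbf{y},D)\subseteq H(\mathbf{y})$, so $\dim V(\mathbf{y},D)\leq m-1$; the real content is the reverse inequality.

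First I would fix $\mathbf{y}\neq\mathbf{0}$ and invoke Lemma~\ref{lem31}: each summand $V(\mathbf{y},E_i)$ is either $E_i$ (dimension $k$, when $\mathbf{y}\in E_i^\perp$) or a hyperplane $E_i\cap H(\mathbf{y})$ of $E_i$ (dimension $k-1$, when $\mathbf{y}\notin E_i^\perp$). By Lemma~\ref{lem32}, $\Omega^\perp=\{E_1^\perp,\dots,E_s^\perp\}$ is again a partial spread, so the subspaces $E_i^\perp$ pairwise intersect trivially; since $\mathbf{y}\neq\mathbf{0}$, it lies in at most one of them. Therefore at most one index $i$ has $\mathbf{y}\in E_i^\perp$, and for all the remaining (at least $s-1\geq q$) indices we are in the generic case $\dim V(\mathbf{y},E_i)=k-1$ with $V(\mathbf{y},E_i)=E_i\cap H(\mathbf{y})$. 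The key point to exploit is that $H(\mathbf{y})$ is an $(m-1)=(2k-1)$-dimensional space, and inside it we have at least $q\geq 2$ subspaces $E_i\cap H(\mathbf{y})$, each of dimension $k-1$, and (because $E_i\cap E_j=\{\mathbf{0}\}$ for $i\neq j$, cf. \eqref{sum}) these hyperplane-slices also pairwise intersect trivially inside $H(\mathbf{y})$.

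The core computation is then a dimension count: I want to show that the span of these slices fills up $H(\mathbf{y})$. Take two distinct indices $i,j$ from the generic set (possible since there are at least $q\geq 2$ of them, and if $q=1$ the statement is vacuous anyway). Using $E_i+E_j=\mathbb{F}_q^m$ from \eqref{sum} together with $E_i\cap E_j=\{\mathbf 0\}$, a modular-law / inclusion argument gives $\dim\big((E_i\cap H(\mathbf{y}))+(E_j\cap H(\mathbf{y}))\big)$; intersecting the identity $E_i+E_j=\mathbb{F}_q^m$ with $H(\mathbf{y})$ and using that $E_i\cap H(\mathbf y)$, $E_j\cap H(\mathbf y)$ meet trivially, one gets this sum already has dimension $(k-1)+(k-1)=2k-2=m-2$ — just one short — and then adding even one more generic slice $E_\ell\cap H(\mathbf y)$ (which exists because $s-1\geq q\geq 2$ forces at least two, and in fact we need a third only if $q\ge 3$; for $q=2$ a direct argument with the two slices plus the intersection structure closes the gap) must increase the dimension to $m-1$, since a third $(k-1)$-dimensional subspace meeting both of the first two trivially cannot be contained in their $(m-2)$-dimensional sum. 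Hence $\dim V(\mathbf{y},D)=m-1$, and Lemma~\ref{lem23} finishes the proof.

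The main obstacle I anticipate is the small-$q$ endgame: pushing the dimension from $m-2$ up to $m-1$ cleanly, uniformly in $q$, using only the partial-spread hypotheses ($k$-dimensional, pairwise trivial intersection, and hence $E_i+E_j=\mathbb{F}_q^m$). The count "two slices already give $m-2$" is robust, but arguing that some further generic slice is not swallowed by that $(m-2)$-space — equivalently, that the at-least-$q$ mutually-trivially-intersecting $(k-1)$-dimensional subspaces of the $(2k-1)$-dimensional space $H(\mathbf y)$ cannot all lie in a common hyperplane of $H(\mathbf y)$ — is where one has to be careful, and is presumably exactly where the threshold $s\geq q+1$ (so at least $q$ generic slices) enters.
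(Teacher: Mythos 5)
Your reduction via Lemmas \ref{lem23}, \ref{lem31} and \ref{lem32} matches the paper's, and the observation that two generic slices $E_i\cap H(\mathbf y)$ already span an $(m-2)$-dimensional subspace is correct. But the step you lean on to close the gap --- ``a third $(k-1)$-dimensional subspace meeting both of the first two trivially cannot be contained in their $(m-2)$-dimensional sum'' --- is false, and the paper itself exhibits the counterexample: in Theorem \ref{thm33}, for the partial spread $\Omega=\{E_b\mid b\in\mathbb{F}_q\}$ of (\ref{eq11}) and $\mathbf y_0=\mathbf e_1$, all $q$ slices $H(\mathbf y_0,E_b)=\mathrm{Span}(\{\mathbf e_2+b\mathbf e_{k+2},\dots,\mathbf e_k+b\mathbf e_m\})$ are $(k-1)$-dimensional, pairwise trivially intersecting, and all contained in the $(m-2)$-dimensional space $\mathrm{Span}(\{\mathbf e_2,\dots,\mathbf e_k,\mathbf e_{k+2},\dots,\mathbf e_m\})$. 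So a dimension count on the slices alone cannot work; you correctly flag this as ``the main obstacle,'' but you do not resolve it, and your count of ``at least $q$ generic slices'' is in any case one short of what is needed: the example above shows that $q$ mutually-trivially-intersecting generic slices coming from a partial spread can indeed all lie in a common hyperplane of $H(\mathbf y)$.

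The paper's resolution is a quotient-space pigeonhole that uses the full subspaces $E_i$, not just their slices. In the hard case ($\mathbf y\notin E_i^\perp$ for every $i$, so all $s\geq q+1$ slices are generic), let $W$ be the span of two slices and $\overline V=\mathbb{F}_q^m/W$, a $2$-dimensional space; each $\pi(E_i)$ is nonzero, since $\pi(E_i)=\{\overline{\mathbf 0}\}$ would force $E_i\subseteq H(\mathbf y)$ and hence $\mathbf y\in E_i^\perp$. If some $\pi(E_{i_0})=\overline V$, one lifts a suitable vector of $H(\mathbf y)$ into $E_{i_0}\setminus W$; otherwise the $\pi(E_i)$ are $s\geq q+1$ pairwise distinct lines of $\overline V$ (distinct because $E_i+E_j=\mathbb{F}_q^m$), hence they exhaust all $q+1$ lines, and in particular one of them equals the line $\pi(H(\mathbf y))$, which again yields a vector of $H(\mathbf y,D)$ outside $W$ and pushes the dimension to $m-1$. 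This pigeonhole on the $q+1$ lines of a $2$-dimensional quotient is precisely where $s\geq q+1$ enters, and it is the ingredient missing from your argument.
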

\begin{proof}
According to Lemma \ref{lem23}, we only need to prove that for any $\mathbf{y}\in \mathbb{F}_q^m\backslash\{\mathbf{0}\}$, dim$V(\mathbf{y},D)=m-1$. By (\ref{eq2}), we get
\begin{equation}\label{eq5}
H(\mathbf{y},D)=D\cap H(\mathbf{y})=\bigcup\limits^s\limits_{i=1}(H(\mathbf{y},E_i)\backslash \{\mathbf{0}\}).
\end{equation}

When $k=1$, then  $m=2$. For any $1\leq i\neq j\leq s$, $E_i$ is a one dimensional subspace of
$\mathbb{F}_q^2$ and $E_i\cap E_j=\{\mathbf{0}\}$. By Lemma \ref{lem32}, we have that ${E_i}^\perp$ is also a one dimensional subspace of
$\mathbb{F}_q^2$ and ${E_i}^\perp \cap {E_j}^\perp=\{\mathbf{0}\}$. Note that $\mathbb{F}_q^2$ has $q+1$ one dimensional subspaces in all and $s\geq q+1$,
so $s=q+1$ and $\mathbb{F}_q^2=\bigcup\limits^s\limits_{i=1}{E_i}^\perp$. Then for any $\mathbf{y}\in \mathbb{F}_q^2\backslash\{\mathbf{0}\}$, there exists
$i_0$, such that $\mathbf{y}\in {E_{i_0}}^\perp$, so ${\rm{dim}}V(\mathbf{y},E_{i_0})=1$, thus ${\rm{dim}}V(\mathbf{y},D)=1=m-1$.

When $k>1$, there are two cases.

(1) If there exists $i_0$ such that $\mathbf{y}\in E_{i_0}^\perp$, then by Lemma {\ref{lem31}}, ${\rm{dim}}H(\mathbf{y},E_{i_0})=k$. According to
Lemma {\ref{lem31}} and Lemma {\ref{lem32}}, we can take $j_0\neq i_0$
such that ${\rm{dim}}H(\mathbf{y},E_{j_0})=k-1$. Let $\alpha_1,\cdots,\alpha_k$ be a basis of $H(\mathbf{y},E_{i_0})$ and
$\beta_1,\cdots,\beta_{k-1}$ be a basis of $H(\mathbf{y},E_{j_0})$.
By (\ref{eq5}), we have
\begin{equation}\label{eq6}
H(\mathbf{y},D)\supseteq \{\alpha_1,\cdots,\alpha_k,\beta_1,\cdots,\beta_{k-1}\}.
\end{equation}
Since $E_{i_0}\cap E_{j_0}=\{\mathbf{0}\}$,
\begin{equation}\label{eq7}
r(\{\alpha_1,\cdots,\alpha_k,\beta_1,\cdots,\beta_{k-1}\})=2k-1=m-1.
\end{equation}
Combining (\ref{eq6}) and (\ref{eq7}), we get ${\rm{dim}}V(\mathbf{y},D)=m-1$.

(2) If for any $1\leq i\leq s$, $\mathbf{y}\notin E_i^\perp$, then ${\rm{dim}}H(\mathbf{y},E_i)=k-1$. Let $\alpha_1,\cdots,\alpha_{k-1},\alpha_k$ be a basis of $E_1$, where $\alpha_1,\cdots,\alpha_{k-1}$ is a basis of $H(\mathbf{y},E_1)$. Let $\beta_1,\cdots,\beta_{k-1},\beta_k$ be a basis of $E_2$,
where $\beta_1,\cdots,\beta_{k-1}$ be a basis of $H(\mathbf{y},E_2)$. Let
$$B=\{\alpha_1,\cdots,\alpha_{k-1},\beta_1,\cdots,\beta_{k-1}\}.$$
Since $E_1\cap E_2=\{\mathbf{0}\}$, we have $r(B)=2k-2=m-2$.

Let $V=\mathbb{F}_q^m$, $W=\mathrm{Span}(B)$ and $\overline{V}=V/W$ the quotient space. Then ${\rm{dim}}_{\mathbb{F}_q}\overline{V}=2$.
Let $\pi$ be the standard mapping from $V$ to $\overline{V}$. Note that $\pi$ is a linear mapping, then for any $1\leq i\leq s$,
$\pi(E_i)$ is a subspace of $\overline{V}$. If $\pi(E_i)=\{\overline{\mathbf{0}}\}$, then we get $E_i\leq W\leq H(\mathbf{y})$,
thus $\mathbf{y}\in {E_i}^\perp$, a contradiction. So ${\rm{dim}}\pi(E_i)=1$ or 2.

(i) If there exists $i_0$ such that ${\rm{dim}}\pi(E_{i_0})=2$, then $\pi(E_{i_0})=\overline{V}$. Let $b=(\alpha_k\mathbf{y}^T)/(\beta_k\mathbf{y}^T)$.
Then there exists $\alpha\in E_{i_0}$ such that $\pi(\alpha)=\overline{\alpha_k-b\beta_k}$. So $\alpha=\alpha_k-b\beta_k+\mathbf{w}$, $\mathbf{w}\in W$. It is easy to see that $\alpha \neq \mathbf{0}$, $\alpha \notin W$ and $\alpha \in H(\mathbf{y})$.
By (\ref{eq5}), we have
\begin{equation}\label{eq8}
H(\mathbf{y},D)\supseteq (B\cup \{\alpha\}).
\end{equation}
while
\begin{equation}\label{eq9}
r(B\cup \{\alpha\})=m-1.
\end{equation}
Combining (\ref{eq8}) and (\ref{eq9}), we have ${\rm{dim}}V(\mathbf{y},D)=m-1$.

(ii) If for any $1\leq i\leq s$, ${\rm{dim}}\pi(E_i)=1$. Since for any  $1\leq i\neq j\leq s$ , we have $V=E_i+E_j$. So we get $\overline{V}=\pi(V)=\pi(E_i+E_j)=\pi(E_i)+\pi(E_j)$. Thus for any  $1\leq i\neq j\leq s$, $\pi(E_i)\neq \pi(E_j)$. Since $\overline{V}$ has $q+1$ one dimensional subspaces in all and $s\geq q+1$. Hence, in this case, $s=q+1$ and $\overline{V}=\bigcup\limits^s\limits_{i=1}\pi(E_i)$. Then there exists $j_0$ such that $\pi(E_{j_0})=\mathrm{Span}(\{\overline{\alpha_k-b\beta_k}\})$. Thus there exists $\alpha\in E_{j_0}$ such that $\pi(\alpha)=\overline{\alpha_k-b\beta_k}$.
Then we have $\alpha=\alpha_k-b\beta_k+\mathbf{w}$, $\mathbf{w}\in W$. It is easy to see that $\alpha \in H(\mathbf{y})$ and $\alpha \notin W$. So
$$r\{B\cup \{\alpha\}\}=m-1, H(\mathbf{y},D)\supseteq B\cup \{\alpha\}.$$
and ${\rm{dim}}V(\mathbf{y},D)=m-1$.

In conclusion, when $k>1$,  for any $\mathbf{y}\in \mathbb{F}_q^m\backslash \{\mathbf{0}\}$, ${\rm{dim}}V(\mathbf{y},D)=m-1$, thus $\mathcal{C}(D)$ is minimal.
\end{proof}
\begin{remark}
Theorem \ref{thm31} is a special case of \cite[Theorem 10]{LYT2022} when $t_0=0$. Our method is different from theirs. When $s\leq q$, our method also can be used to study the minimality of the linear codes, while theirs can not.
\end{remark}

Second, when $s=3$, we have
\begin{thm}\label{thm32}
Let $\Omega=\{E_1,...,E_s\}$ be a partial spread of $\mathbb{F}_q^{m}$. If $s=3\leq q$, then $\mathcal{C}(D)$ is not minimal.
\end{thm}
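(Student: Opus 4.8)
The plan is to invoke Lemma~\ref{lem23}: it suffices to exhibit a single nonzero vector $\mathbf{y}\in\mathbb{F}_q^m$ with $\dim V(\mathbf{y},D)\neq m-1$; in fact I will arrange $\dim V(\mathbf{y},D)=m-2$. Recall from (\ref{eq5}) that $V(\mathbf{y},D)=H(\mathbf{y},E_1)+H(\mathbf{y},E_2)+H(\mathbf{y},E_3)$, so the entire task is to choose $\mathbf{y}$ so that these three subspaces—each cut out of a $k$-dimensional $E_i$ by one linear condition—all fit inside a common $(2k-2)$-dimensional subspace of $H(\mathbf{y})$.

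First I would set up coordinates. Since $E_1\cap E_2=\{\mathbf{0}\}$ and $\dim E_1+\dim E_2=m$, we have $\mathbb{F}_q^m=E_1\oplus E_2$; let $\pi_1,\pi_2$ be the corresponding projections. Because $E_3\cap E_1=E_3\cap E_2=\{\mathbf{0}\}$, both $\pi_1|_{E_3}$ and $\pi_2|_{E_3}$ are isomorphisms, so $\phi:=\pi_2\circ(\pi_1|_{E_3})^{-1}$ is a linear isomorphism $E_1\to E_2$ and $E_3=\{\mathbf{u}+\phi(\mathbf{u}):\mathbf{u}\in E_1\}$. (Equivalently, one may first apply an invertible linear transformation of $\mathbb{F}_q^m$, which replaces $\mathcal{C}(D)$ by an equivalent, hence equally (non)minimal, code, so that $E_1=\mathbb{F}_q^k\times\{\mathbf{0}\}$, $E_2=\{\mathbf{0}\}\times\mathbb{F}_q^k$, $E_3=\{(\mathbf{u},\mathbf{u}):\mathbf{u}\in\mathbb{F}_q^k\}$.)

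Now, since $s=3\le q$ forces $q\ge 3$, I can pick a scalar $\lambda\in\mathbb{F}_q\setminus\{0,-1\}$ together with a nonzero linear functional $\ell$ on $E_1$, whose kernel $W:=\ker\ell$ has dimension $k-1$. Let $\mathbf{y}\in\mathbb{F}_q^m\setminus\{\mathbf{0}\}$ be the unique vector for which $\langle\mathbf{x},\mathbf{y}\rangle$ restricts to $\ell$ on $E_1$ and to $\lambda\,(\ell\circ\phi^{-1})$ on $E_2$ (in the normalized picture, $\mathbf{y}=(\mathbf{a},\lambda\mathbf{a})$ where $\mathbf{a}\in\mathbb{F}_q^k$ is nonzero and $\ell=\langle\,\cdot\,,\mathbf{a}\rangle$). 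A direct computation then yields $H(\mathbf{y},E_1)=W$ and $H(\mathbf{y},E_2)=\phi(W)$, and, since for $\mathbf{u}\in E_1$ one has $\langle\mathbf{u}+\phi(\mathbf{u}),\mathbf{y}\rangle=(1+\lambda)\ell(\mathbf{u})$ with $1+\lambda\neq0$, also $H(\mathbf{y},E_3)=\{\mathbf{u}+\phi(\mathbf{u}):\mathbf{u}\in W\}$. The role of the choice of $\lambda$ is exactly that the three functionals cutting out these subspaces—$\ell$, $\lambda\ell$, and $(1+\lambda)\ell$, transported to $E_1$—all have the same kernel $W$.

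Finally I would conclude: $W\subseteq E_1$ and $\phi(W)\subseteq E_2$ intersect trivially, so $W+\phi(W)$ is a direct sum of dimension $2(k-1)=m-2$, and $H(\mathbf{y},E_3)=\{\mathbf{u}+\phi(\mathbf{u}):\mathbf{u}\in W\}\subseteq W+\phi(W)$; hence $V(\mathbf{y},D)=W\oplus\phi(W)$ has dimension $m-2\neq m-1$, and Lemma~\ref{lem23} shows that $\mathcal{C}(D)$ is not minimal. The one genuinely delicate point is this alignment step: one must engineer $\mathbf{y}$ so that $H(\mathbf{y},E_3)$ does not protrude from $H(\mathbf{y},E_1)+H(\mathbf{y},E_2)$, and this is precisely where the hypothesis $s=3\le q$ (that is, $q\ge 3$, ensuring a usable $\lambda$) is indispensable—over $\mathbb{F}_2$ no such $\lambda$ exists, which is consistent with the theorem requiring $s\le q$.
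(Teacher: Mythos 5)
Your proof is correct and is essentially the paper's own argument in different clothing: writing $\mathbf{y}=\mathbf{y}_1+\mathbf{y}_2$ with $\mathbf{y}_1\in E_2^\perp$ restricting to $\ell$ on $E_1$ and $\mathbf{y}_2\in E_1^\perp$ restricting to $\lambda(\ell\circ\phi^{-1})$ on $E_2$ recovers the paper's choice exactly, with your condition $\lambda\notin\{0,-1\}$ corresponding to its requirement that $\varphi(\alpha_k)\mathbf{y}_2^T\neq -1$ (and $\neq 0$, which holds automatically there). Your packaging makes the role of $q\geq 3$ somewhat more transparent, but the underlying idea---aligning the three restricted functionals so that $H(\mathbf{y},E_1)$, $H(\mathbf{y},E_2)$ and $H(\mathbf{y},E_3)$ all land in the $(m-2)$-dimensional space $W\oplus\phi(W)$---is the same.
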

\begin{proof}
Since $\mathbb{F}_q^m=E_1 +E_2$ and for any $1\leq i \neq j \leq 3$ ,$E_i \cap E_j=\{\mathbf{0}\}$, it is easy to see for any $\alpha \in E_1$, there exists unique $\beta \in E_2$ such that $\alpha+\beta \in E_3$. If  we define a map from $E_1$ to $E_2$, $\varphi(\alpha)=\beta$, then $\varphi $ is a linear isomorphism from $E_1$ to $E_2$ and $E_3=\{\mathbf{x}+\varphi(\mathbf{x})|\mathbf{x}\in E_1\}$.

By Lemma \ref{lem32}, we can get ${\rm{dim}}E_1^\perp={\rm{dim}}E_2^\perp=k$ and $E_1^\perp\cap E_2^\perp=\{\mathbf{0}\}$. Thus for any $\mathbf{y}_1\in E_2^\perp\backslash\{0\}$, we have $\mathbf{y}_1\notin E_1^\perp$ and then ${\rm{dim}}H(\mathbf{y}_1,E_1)=k-1$ by
Lemma \ref{lem31}. Let $\alpha_1,\cdots,\alpha_k$ be a basis of $E_1$, such that
$$
\alpha_i\mathbf{y}_1^T=\left\{
            \begin{array}{ll}
             0, & \hbox{for\ $1\leq i\leq k-1;$}\\
             1, & \hbox{for\ $i=k.$}\\
            \end{array}
          \right.
$$
Since $\varphi$ is a linear isomorphism, we have ${\rm{dim}}\varphi(H(\mathbf{y}_1,E_1))=k-1$ and ${\rm{dim}}\varphi(H(\mathbf{y}_1,E_1))^\perp=k+1$.
Thus
$$
\begin{aligned}
& {\rm{dim}}(\varphi(H(\mathbf{y}_1,E_1))^\perp \cap  E_1^\perp)\\
&={\rm{dim}}(\varphi(H(\mathbf{y}_1,E_1))^\perp)+{\rm{dim}}(E_1^\perp)-{\rm{dim}}(\varphi(H(\mathbf{y}_1,E_1))^\perp+E_1^\perp)\\
& \geq k+1+k-m=1.
\end{aligned}
$$
Since $q\geq3$,  there must exist $\mathbf{y}_2\in (\varphi(H(\mathbf{y}_1,E_1))^\perp \cap E_1^\perp) \backslash \{\mathbf{0}\}$, such that $\varphi(\alpha_k)\mathbf{y}_2^T\neq -1$. Since $E_i^\perp \cap E_j^\perp=\{\mathbf{0}\}$, we have $\mathbf{y}_2\notin E_2^\perp$. Now we take $\mathbf{y}_0=\mathbf{y}_1+\mathbf{y}_2$. Since $\mathbf{y}_1\notin E_1^\perp$
 and $\mathbf{y}_2\in E_1^\perp$, we have $\mathbf{y}_0\notin E_1^\perp$. Since $\mathbf{y}_1\in E_2^\perp$ and $\mathbf{y}_2\notin E_2^\perp$, we have $\mathbf{y}_0\notin E_2^\perp$. Since $\alpha_k+\varphi(\alpha_k)\in E_3$ and
$$
\begin{aligned}
(\alpha_k+\varphi(\alpha_k))\mathbf{y}_0^T &=(\alpha_k+\varphi(\alpha_k))(\mathbf{y}_1+\mathbf{y}_2)^T\\
&=\alpha_k\mathbf{y}_1^T+\alpha_k\mathbf{y}_2^T+\varphi(\alpha_k)\mathbf{y}_1^T+\varphi(\alpha_k)\mathbf{y}_2^T\\
&=1+0+0+\varphi(\alpha_k)\mathbf{y}_2^T\neq0.
\end{aligned}
$$
Hence we have $\mathbf{y}_0\notin E_3^\perp$. So for all $1\leq i \leq 3$, we have ${\rm{dim}}H(\mathbf{y}_0,E_i)=k-1$ by Lemma \ref{lem31}. It is easy to verify that
$$H(\mathbf{y}_0,E_1)=H(\mathbf{y}_1,E_1),$$
$$H(\mathbf{y}_0,E_2)=H(\mathbf{y}_2,E_2)=\varphi(H(\mathbf{y}_1,E_1))=\varphi(H(\mathbf{y}_0,E_1)),$$
$$H(\mathbf{y}_0,E_3)=\{\mathbf{x}+\varphi(\mathbf{x})|\mathbf{x}\in H(\mathbf{y}_0,E_1)\}\subseteq \mathrm{Span}(H(\mathbf{y}_0,E_1)\cup H(\mathbf{y}_0,E_2)).$$
Then ${\rm{dim}}V(\mathbf{y}_0,D)=k-1+k-1=m-2$. By Lemma \ref{lem23}, we have $c(\mathbf{y}_0)$ is not minimal.
\end{proof}
Combining Theorem \ref{thm32} and Corollary \ref{cor21}, we have
\begin{cor}\label{cor31}
Let $\Omega=\{E_1,...,E_s\}$ be a partial spread of $\mathbb{F}_q^{m}$. If $2\leq s\leq3\leq q$, then $\mathcal{C}(D)$ is not minimal.
\end{cor}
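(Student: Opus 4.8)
The plan is to reduce everything to the case $s=3$ already settled in Theorem~\ref{thm32}, and then invoke the monotonicity principle of Corollary~\ref{cor21}. Since $2\le s\le 3$, we have $s\in\{2,3\}$. If $s=3$, the hypothesis $s\le q$ is exactly that of Theorem~\ref{thm32}, so $\mathcal{C}(D)$ is not minimal and there is nothing more to do. The only real case is therefore $s=2$, where $\Omega=\{E_1,E_2\}$ and $D=(E_1\setminus\{\mathbf 0\})\cup(E_2\setminus\{\mathbf 0\})$, and the idea is to enlarge $D$ to the distinguished set of a three-element partial spread without changing the rank.

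First I would construct a third member. Since $\dim E_1=\dim E_2=k$, $m=2k$, and $E_1\cap E_2=\{\mathbf 0\}$, we have $E_1\oplus E_2=\mathbb{F}_q^m$; fixing any linear isomorphism $\psi\colon E_1\to E_2$ and putting $E_3:=\{\mathbf x+\psi(\mathbf x):\mathbf x\in E_1\}$ gives a $k$-dimensional subspace with $E_3\cap E_1=E_3\cap E_2=\{\mathbf 0\}$: indeed, if $\mathbf x+\psi(\mathbf x)$ lies in $E_1$ or in $E_2$, then $\psi(\mathbf x)\in E_1\cap E_2=\{\mathbf 0\}$, so $\mathbf x=\mathbf 0$. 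Hence $\Omega':=\{E_1,E_2,E_3\}$ is a partial spread of $\mathbb{F}_q^m$ with three elements (this is the same "graph of an isomorphism" device that appears as the map $\varphi$ in the proof of Theorem~\ref{thm32}).

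Next, set $D':=\bigcup_{i=1}^{3}(E_i\setminus\{\mathbf 0\})$. Then $D\subseteq D'$, and since $\mathrm{Span}(E_1\cup E_2)=E_1+E_2=\mathbb{F}_q^m$ we have $r(D)=r(D')=m$. Because $3\le q$, Theorem~\ref{thm32} applies to $\Omega'$ and shows $\mathcal{C}(D')$ is not minimal; Corollary~\ref{cor21} then gives that $\mathcal{C}(D)$ is not minimal. This handles $s=2$ and completes the proof. I do not expect any genuine obstacle here: the only point needing verification is the existence of the third subspace $E_3$, which is the routine construction just described, and the rank condition $r(D)=r(D')=m$, which is immediate from $E_1+E_2=\mathbb{F}_q^m$.
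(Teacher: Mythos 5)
Your proof is correct and follows essentially the same route as the paper, which obtains the corollary by combining Theorem~\ref{thm32} with Corollary~\ref{cor21}; you have merely made explicit the routine step the paper leaves implicit, namely extending a two-element partial spread $\{E_1,E_2\}$ to a three-element one via the graph of an isomorphism $E_1\to E_2$ and checking $r(D)=r(D')=m$.
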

Third, when $3<s\leq q$, the situation is somewhat complicated, because for some partial spreads $\Omega=\{E_1,...,E_s\}$, $\mathcal{C}(D)$ are minimal while for some other partial  spreads $\Omega$, $\mathcal{C}(D)$ are not minimal.

First,we give examples such that $\mathcal{C}(D)$ are not minimal. Let $\mathbf{e}_1,\cdots,\mathbf{e}_m$ be the standard basis of $\mathbb{F}_q^m$.
For any $b\in \mathbb{F}_q$, we define
\begin{equation}\label{eq10}
E_b={\rm{Span}}(\{\mathbf{e}_i+b\mathbf{e}_{k+i}\mid 1\leq i \leq k),
\end{equation}
and
\begin{equation}\label{eq11}
\Omega=\{E_b|b\in \mathbb{F}_q\}.
\end{equation}
It is easy to prove that ${\rm{dim}}E_b=k$, and $E_{a}\cap E_{b}=\{\mathbf{0}\}$, for any $a$, $b\in \mathbb{F}_q$, $a\neq b$. That is to say, $\Omega$ is a partial spread of $\mathbb{F}_q^m$.
\begin{thm}\label{thm33}
For the partial spread $\Omega$ defined in (\ref{eq11}), the linear code $\mathcal{C}(D)$ is not minimal.
\end{thm}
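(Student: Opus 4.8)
The plan is to apply the geometric criterion of Lemma~\ref{lem23}: to prove that $\mathcal{C}(D)$ is not minimal it suffices to exhibit a single nonzero vector $\mathbf{y}\in\mathbb{F}_q^m$ with $\dim V(\mathbf{y},D)\leq m-2$. I would take $\mathbf{y}=\mathbf{e}_1$ and compute $V(\mathbf{e}_1,D)$ directly. The heuristic is that $\mathbf{e}_1$ lies in $E_b^\perp$ for \emph{no} value of $b$, and the hyperplane sections $H(\mathbf{e}_1,E_b)$ are mutually ``parallel'', so their span cannot reach dimension $m-1$.

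First I would unwind the definition (\ref{eq10}): a typical element of $E_b$ is $\sum_{i=1}^{k}x_i(\mathbf{e}_i+b\mathbf{e}_{k+i})=(x_1,\cdots,x_k,bx_1,\cdots,bx_k)$ with $(x_1,\cdots,x_k)\in\mathbb{F}_q^k$, so its Euclidean inner product with $\mathbf{e}_1$ equals $x_1$. Hence for every $b\in\mathbb{F}_q$,
\[
H(\mathbf{e}_1,E_b)=\{(x_1,\cdots,x_k,bx_1,\cdots,bx_k):x_1=0\}=\mathrm{Span}\{\mathbf{e}_i+b\mathbf{e}_{k+i}:2\leq i\leq k\},
\]
which is $(k-1)$-dimensional; equivalently $\mathbf{e}_1\notin E_b^\perp$, so this dimension also follows from Lemma~\ref{lem31}.

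Next, by (\ref{eq2}) we have $H(\mathbf{e}_1,D)=D\cap H(\mathbf{e}_1)=\bigcup_{b\in\mathbb{F}_q}\left(H(\mathbf{e}_1,E_b)\setminus\{\mathbf{0}\}\right)$, and therefore $V(\mathbf{e}_1,D)=\mathrm{Span}\left(\bigcup_{b\in\mathbb{F}_q}H(\mathbf{e}_1,E_b)\right)$. Every vector occurring in this union has a zero in its $1$st and $(k+1)$st coordinates, so $V(\mathbf{e}_1,D)\subseteq\mathrm{Span}\{\mathbf{e}_i,\mathbf{e}_{k+i}:2\leq i\leq k\}$. Conversely, the section with $b=0$ supplies $\mathbf{e}_i$ for $2\leq i\leq k$, and the section with $b=1$ supplies $\mathbf{e}_i+\mathbf{e}_{k+i}$ and hence $\mathbf{e}_{k+i}$ for $2\leq i\leq k$; so in fact $V(\mathbf{e}_1,D)=\mathrm{Span}\{\mathbf{e}_i,\mathbf{e}_{k+i}:2\leq i\leq k\}$, which has dimension $2(k-1)=m-2$. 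As $m-2<m-1$, Lemma~\ref{lem23} (equivalently, Lemma~\ref{sn} applied to $\mathbf{y}=\mathbf{e}_1$, which already shows the codeword $\mathbf{c}(\mathbf{e}_1)$ is not minimal) yields that $\mathcal{C}(D)$ is not minimal.

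I do not expect a genuine obstacle here; the computation is uniform in both $k$ and $q$ (when $k=1$ the index set $\{i:2\leq i\leq k\}$ is empty and $V(\mathbf{e}_1,D)=\{\mathbf{0}\}$, still of dimension $m-2=0$). The one step deserving a moment's care is the reverse inclusion $\mathrm{Span}\{\mathbf{e}_i,\mathbf{e}_{k+i}:2\leq i\leq k\}\subseteq V(\mathbf{e}_1,D)$: it relies on the partial spread containing the two distinct blocks $E_0$ and $E_1$ — that is, on $0\neq 1$ in $\mathbb{F}_q$ — which is precisely what lets one decouple the first block of $k$ coordinates from the second. An alternative packaging is to note that the sub-multiset $D'=(E_0\setminus\{\mathbf{0}\})\cup(E_1\setminus\{\mathbf{0}\})$ already has rank $m$ since $E_0+E_1=\mathbb{F}_q^m$ by (\ref{sum}), prove $\mathcal{C}(D')$ is not minimal by the same computation, and then invoke Corollary~\ref{cor21}; but the direct argument is just as short.
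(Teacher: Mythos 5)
Your proposal is correct and follows essentially the same route as the paper: both take the witness vector $\mathbf{y}=\mathbf{e}_1$, observe that every $H(\mathbf{e}_1,E_b)$ lies in $\mathrm{Span}\{\mathbf{e}_i,\mathbf{e}_{k+i}:2\leq i\leq k\}$, and conclude $\dim V(\mathbf{e}_1,D)\leq m-2$ via Lemma~\ref{lem23}. Your extra work (the reverse inclusion giving exact dimension $m-2$, and the $k=1$ check) is correct but not needed, since the upper bound alone suffices.
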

\begin{proof}
Let $\mathbf{y}_0=\mathbf{e}_1$. Then for any $b\in \mathbb{F}_q$, we get
$$
\begin{aligned}
H(\mathbf{y}_0,E_b)&={\rm{Span}}(\{ \mathbf{e}_2+b\mathbf{e}_{k+2},\cdots , \mathbf{e}_k+b\mathbf{e}_m\})\\
& \subseteq {\rm{Span}}(\{ \mathbf{e}_2,\cdots, \mathbf{e}_k,\mathbf{e}_{k+2},\cdots,\mathbf{e}_m\}).
\end{aligned}
$$
By (\ref{eq5}), we have
$$H(\mathbf{y}_0,D)\subseteq \mathrm{Span}(\{ \mathbf{e}_2,\cdots, \mathbf{e}_k,\mathbf{e}_{k+2},\cdots,\mathbf{e}_m\}).$$
and then ${\rm{dim}}V(\mathbf{y}_0,D)\leq m-2$. By Lemma \ref{lem23}, we have $\mathbf{c}$$(\mathbf{y}_0)$ is not minimal and $\mathcal{C}(D)$ is not minimal.
\end{proof}
Combining Theorem \ref{thm33} and Corollary \ref{cor21}, we have
\begin{cor}
Let $3<s\leq q$ and $S\subseteq \mathbb{F}_q$ where $\# S=s$. Let $\Omega=\{E_b|\ b\in S\}$. Then $\mathcal{C}(D)$ is not minimal.
\end{cor}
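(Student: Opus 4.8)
The plan is to deduce this from Theorem~\ref{thm33} via the monotonicity principle of Corollary~\ref{cor21}. Write $D' = \bigl(\bigcup_{b\in\mathbb{F}_q}E_b\bigr)\setminus\{\mathbf{0}\}$ for the multiset attached to the full partial spread $\Omega' = \{E_b \mid b \in \mathbb{F}_q\}$ of (\ref{eq11}), and let $D = \bigl(\bigcup_{b\in S}E_b\bigr)\setminus\{\mathbf{0}\}$ be the multiset attached to $\Omega = \{E_b \mid b\in S\}$. Since $S \subseteq \mathbb{F}_q$, we have $D \subseteq D'$, so the two codes $\mathcal{C}(D)$ and $\mathcal{C}(D')$ sit in the nested situation required by Corollary~\ref{cor21}.

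The only hypothesis of Corollary~\ref{cor21} that needs checking is $r(D) = r(D') = m$, so I would verify that first. Because $3 < s$, the set $S$ contains at least two distinct elements $a \neq b$; then $E_a$ and $E_b$ are among the subspaces making up $D$, and by (\ref{sum}) we have $E_a + E_b = \mathbb{F}_q^m$. Hence $\mathrm{Span}(D) \supseteq E_a + E_b = \mathbb{F}_q^m$, so $r(D) = m$; and then $m = r(D) \leq r(D') \leq m$ forces $r(D') = m$ as well.

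Now Theorem~\ref{thm33} tells us that $\mathcal{C}(D')$ is not minimal. Applying Corollary~\ref{cor21} to the inclusion $D \subseteq D'$ with $r(D) = r(D') = m$, we conclude that $\mathcal{C}(D)$ is not minimal, which is exactly the assertion.

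I expect no genuine obstacle here: the statement is a bona fide corollary and the argument is essentially bookkeeping, the one point deserving a line of care being the rank condition above. If one preferred to avoid invoking Corollary~\ref{cor21}, an equally short route is to rerun the computation in the proof of Theorem~\ref{thm33} verbatim with $\mathbf{y}_0 = \mathbf{e}_1$: the inclusion $H(\mathbf{y}_0, E_b) \subseteq \mathrm{Span}(\{\mathbf{e}_2,\dots,\mathbf{e}_k,\mathbf{e}_{k+2},\dots,\mathbf{e}_m\})$ holds for every $b$, irrespective of whether $b$ ranges over all of $\mathbb{F}_q$ or only over $S$, so $H(\mathbf{y}_0, D)$ lies in that same $(m-2)$-dimensional span, whence $\dim V(\mathbf{y}_0, D) \leq m-2$ and Lemma~\ref{lem23} gives the non-minimality directly.
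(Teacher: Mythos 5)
Your proposal is correct and follows exactly the paper's route: the paper also obtains this corollary by combining Theorem~\ref{thm33} with Corollary~\ref{cor21} applied to the inclusion $D\subseteq D'$. Your explicit verification of the rank hypothesis $r(D)=r(D')=m$ (and the alternative direct argument via $\mathbf{y}_0=\mathbf{e}_1$) is a welcome extra detail that the paper leaves implicit.
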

Now, we give examples such that $\mathcal{C}(D)$ are minimal.

Let $f(x)$ be an irreducible polynomial in $\mathbb{F}_q[x] $ of degree $k$ and $M\in M_{k\times k}(\mathbb{F}_q)$ satisfies that the characteristic polynomial of $M$ is $f(x)$. We define
\begin{equation}\label{eq12}
\begin{aligned}
&E_1=\{ (\mathbf{x},\mathbf{0})|\mathbf{x}\in \mathbb{F}_q^k \},E_2=\{ (\mathbf{0},\mathbf{x})|\mathbf{x}\in \mathbb{F}_q^k \},\\
&E_3=\{ (\mathbf{x},\mathbf{x})|\mathbf{x}\in \mathbb{F}_q^k \},E_4=\{ (\mathbf{x},\mathbf{x}M)|\mathbf{x}\in \mathbb{F}_q^k \},
\end{aligned}
\end{equation}
and
\begin{equation}\label{eq13}
\Omega=\{E_1,E_2,E_3, E_4\}.
\end{equation}
It is easy to see ${\rm{dim}}E_i=k$ and $E_i\cap E_j=\{\mathbf{0}\}$ for any $1\leq i\neq j\leq 4$. That is to say, $\Omega$ is a partial spread of $\mathbb{F}_q^m$.
\begin{thm}\label{thm34}
For the partial spread $\Omega$ defined in (\ref{eq13}), the linear code $\mathcal{C}(D)$ is minimal.
\end{thm}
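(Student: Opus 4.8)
The plan is to verify condition~(2) of Lemma~\ref{lem23}, namely that $\dim V(\mathbf{y},D)=m-1$ for every nonzero $\mathbf{y}\in\mathbb{F}_q^m$. Write $\mathbf{y}=(\mathbf{a},\mathbf{b})$ with $\mathbf{a},\mathbf{b}\in\mathbb{F}_q^k$; by (\ref{eq5}) together with Lemma~\ref{lem31} (each $E_i$ is a subspace, so $H(\mathbf{y},E_i)=V(\mathbf{y},E_i)$) we have $V(\mathbf{y},D)=\sum_{i=1}^4 H(\mathbf{y},E_i)$. A direct computation with the Euclidean pairing, watching the transposes in the $E_4$ term, gives the membership criteria $\mathbf{y}\in E_1^\perp\iff\mathbf{a}=\mathbf{0}$, $\mathbf{y}\in E_2^\perp\iff\mathbf{b}=\mathbf{0}$, $\mathbf{y}\in E_3^\perp\iff\mathbf{a}+\mathbf{b}=\mathbf{0}$, and $\mathbf{y}\in E_4^\perp\iff\mathbf{a}+\mathbf{b}M^{T}=\mathbf{0}$. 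Since the characteristic polynomial $f$ is irreducible, $M$ is invertible; and when $k\ge2$ the polynomial $f$ has no root in $\mathbb{F}_q$, so $M$ has no eigenvalue in $\mathbb{F}_q$. (I will assume $k\ge2$ and comment on $k=1$ at the end.) By Lemma~\ref{lem32}, $\Omega^\perp=\{E_1^\perp,\dots,E_4^\perp\}$ is again a partial spread, so the nonzero vector $\mathbf{y}$ lies in \emph{at most one} of the $E_i^\perp$.

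First suppose $\mathbf{y}\in E_{i_0}^\perp$ for some $i_0$. Then $\dim H(\mathbf{y},E_{i_0})=k$ by Lemma~\ref{lem31}, while $\dim H(\mathbf{y},E_{j_0})=k-1$ for any fixed $j_0\ne i_0$. Since $H(\mathbf{y},E_{i_0})\subseteq E_{i_0}$, $H(\mathbf{y},E_{j_0})\subseteq E_{j_0}$ and $E_{i_0}\cap E_{j_0}=\{\mathbf{0}\}$, the sum of these two subspaces is direct, of dimension $2k-1=m-1$; being contained in $V(\mathbf{y},D)\subseteq H(\mathbf{y})$ with $\dim H(\mathbf{y})=m-1$, this forces $\dim V(\mathbf{y},D)=m-1$. (This step is identical to case~(1) in the proof of Theorem~\ref{thm31}.)

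Now suppose $\mathbf{y}\notin E_i^\perp$ for every $i$, i.e.\ $\mathbf{a},\mathbf{b},\mathbf{a}+\mathbf{b}$ and $\mathbf{a}+\mathbf{b}M^{T}$ are all nonzero, so each $H(\mathbf{y},E_i)$ has dimension $k-1$. Writing $U_{\mathbf{u}}:=\{\mathbf{x}\in\mathbb{F}_q^k:\mathbf{x}\mathbf{u}^{T}=0\}$ for the hyperplane of $\mathbb{F}_q^k$ orthogonal to a nonzero $\mathbf{u}$, one identifies $H(\mathbf{y},E_1)=U_{\mathbf{a}}\times\{\mathbf{0}\}$, $H(\mathbf{y},E_2)=\{\mathbf{0}\}\times U_{\mathbf{b}}$, $H(\mathbf{y},E_3)=\{(\mathbf{x},\mathbf{x}):\mathbf{x}\in U_{\mathbf{a}+\mathbf{b}}\}$ and $H(\mathbf{y},E_4)=\{(\mathbf{x},\mathbf{x}M):\mathbf{x}\in U_{\mathbf{a}+\mathbf{b}M^{T}}\}$. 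In particular $W:=H(\mathbf{y},E_1)+H(\mathbf{y},E_2)=U_{\mathbf{a}}\times U_{\mathbf{b}}$ has dimension $2k-2=m-2$, so $m-2\le\dim V(\mathbf{y},D)\le\dim H(\mathbf{y})=m-1$, and it remains to rule out $V(\mathbf{y},D)=W$, i.e.\ the possibility that $H(\mathbf{y},E_3)$ and $H(\mathbf{y},E_4)$ are \emph{both} contained in $W$. Suppose they are. From $H(\mathbf{y},E_3)\subseteq W$ we get $U_{\mathbf{a}+\mathbf{b}}\subseteq U_{\mathbf{a}}$ and $U_{\mathbf{a}+\mathbf{b}}\subseteq U_{\mathbf{b}}$; as these are all hyperplanes of $\mathbb{F}_q^k$ the inclusions are equalities, forcing $\mathbf{a}$ and $\mathbf{b}$ to be proportional, say $\mathbf{b}=\lambda\mathbf{a}$ with $\lambda\in\mathbb{F}_q\setminus\{0,-1\}$. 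From $H(\mathbf{y},E_4)\subseteq W$ we get $U_{\mathbf{a}+\mathbf{b}M^{T}}\subseteq U_{\mathbf{a}}$ and $\{\mathbf{x}M:\mathbf{x}\in U_{\mathbf{a}+\mathbf{b}M^{T}}\}\subseteq U_{\mathbf{b}}$; since $\mathbf{x}M\in U_{\mathbf{b}}\iff\mathbf{x}\in U_{\mathbf{b}M^{T}}$, the second inclusion reads $U_{\mathbf{a}+\mathbf{b}M^{T}}\subseteq U_{\mathbf{b}M^{T}}$ (note $\mathbf{b}M^{T}\ne\mathbf{0}$ because $\mathbf{b}\ne\mathbf{0}$ and $M$ is invertible), and again equality of hyperplanes makes $\mathbf{a}+\mathbf{b}M^{T}$, $\mathbf{a}$ and $\mathbf{b}M^{T}$ pairwise proportional. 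Hence $\mathbf{b}M^{T}\in\mathrm{Span}(\mathbf{a})$, and since $\mathbf{b}=\lambda\mathbf{a}$ with $\lambda\ne0$ this gives $\mathbf{a}M^{T}=\nu\mathbf{a}$, i.e.\ $M\mathbf{a}^{T}=\nu\mathbf{a}^{T}$, for some $\nu\in\mathbb{F}_q$ and $\mathbf{a}^{T}\ne\mathbf{0}$ — so $\nu$ is an eigenvalue of $M$ in $\mathbb{F}_q$, contradicting the irreducibility of $f$ (of degree $k\ge2$). Therefore $V(\mathbf{y},D)=H(\mathbf{y})$ and $\dim V(\mathbf{y},D)=m-1$.

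In both cases $\dim V(\mathbf{y},D)=m-1$, so $\mathcal{C}(D)$ is minimal by Lemma~\ref{lem23}. The only delicate point is the last case: one must pin down each $H(\mathbf{y},E_i)$ as an explicit subspace of $\mathbb{F}_q^k\times\mathbb{F}_q^k$ — the transposes appearing in the Euclidean pairing must be handled carefully — and then observe that forcing $H(\mathbf{y},E_3)$ and $H(\mathbf{y},E_4)$ simultaneously inside $W$ manufactures an $\mathbb{F}_q$-eigenvector of $M$, which is precisely what the irreducibility of $f$ forbids. The hypothesis $k\ge2$ is genuinely needed here: when $k=1$ the matrix $M$ is a scalar, every nonzero vector is an eigenvector, and $\mathcal{C}(D)$ need not be minimal (for instance, when $q\ge4$ the four lines $E_1^\perp,\dots,E_4^\perp$ do not exhaust the $q+1$ lines of $\mathbb{F}_q^2$, so $\mathbf{c}(\mathbf{y})$ fails to be minimal for $\mathbf{y}$ spanning a missing line).
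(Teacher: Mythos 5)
Your proof is correct and follows the same skeleton as the paper's: both invoke Lemma~\ref{lem23}, split on whether $\mathbf{y}$ lies in some $E_{i_0}^\perp$ (your first case is verbatim the paper's), and in the remaining case both assume $H(\mathbf{y},E_3)$ and $H(\mathbf{y},E_4)$ sit inside the $(m-2)$-dimensional space spanned by $H(\mathbf{y},E_1)\cup H(\mathbf{y},E_2)$ and derive a contradiction from the irreducibility of $f$. The difference is only in how that contradiction is extracted: the paper shows that $H(\mathbf{y}_1)$ would be a $(k-1)$-dimensional $\psi$-invariant subspace, so the characteristic polynomial of $M$ acquires a linear factor $(x-b)$ from the block-triangular matrix of $\psi$; you instead pin down each $H(\mathbf{y},E_i)$ explicitly as $U_{\mathbf{a}}\times\{\mathbf{0}\}$, $\{\mathbf{0}\}\times U_{\mathbf{b}}$, etc., and show the containments force $\mathbf{a}^T$ to be an $\mathbb{F}_q$-eigenvector of $M$. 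These are dual formulations of the same obstruction (an invariant hyperplane for $M$ acting on rows versus an eigenvector of $M$ acting on columns), so the two arguments are essentially equivalent, though yours is more computational and arguably easier to verify.

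Your remark about $k=1$ is a genuine and worthwhile addition: both your hyperplane-equality step and the paper's factorization $f(x)=|xI-B_1|(x-b)$ silently require $k\ge 2$ (for $k=1$ the factorization is no contradiction, since degree-one polynomials are irreducible), and indeed for $k=1$ and $q\ge 4$ the four lines $E_1^\perp,\dots,E_4^\perp$ miss at least one line of $\mathbb{F}_q^2$, so the code is not minimal. The theorem as stated should therefore carry the hypothesis $k\ge 2$; your proof makes this explicit where the paper's does not.
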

\begin{proof}
According to Lemma \ref{lem23}, we only need to prove that for any $\mathbf{y}\in \mathbb{F}_q^m\backslash\{\mathbf{0}\}$, ${\rm{dim}}V(\mathbf{y},D)=m-1$.
There are two cases.

(1) If exists $i_0$ such that $\mathbf{y}\in {E_{i_0}}^\perp$, then by Lemma \ref{lem31}, ${\rm{dim}}H(\mathbf{y},E_{i_0})=k$. According to Lemma \ref{lem31} and Lemma \ref{lem32}, we can take $j_0\neq i_0$ such that ${\rm{dim}}H(\mathbf{y},E_{j_0})=k-1$. Let $\alpha_1,\cdots,\alpha_k$ is a basis of $H(\mathbf{y},E_{i_0})$ and $\beta_1,\cdots,\beta_{k-1}$ is a basis of $H(\mathbf{y},E_{j_0})$. Note that
\begin{equation}\label{eq14}
H(\mathbf{y},D)\supseteq \{\alpha_1,\cdots,\alpha_k,\beta_1,\cdots,\beta_{k-1}\}.
\end{equation}
Since $E_{i_0}\cap E_{j_0}=\{\mathbf{0}\}$, hence
\begin{equation}\label{eq15}
r\{\alpha_1,\cdots,\alpha_k,\beta_1,\cdots,\beta_{k-1}\}=k+k-1=m-1.
\end{equation}
Combining (\ref{eq14}) and (\ref{eq15}), we get ${\rm{dim}}V(\mathbf{y},D)=m-1$.

(2) If for all $1\leq i\leq 4$, $\mathbf{y}\notin {E_i}^\perp$, then $\mathrm{dim}H(\mathbf{y},E_i)=k-1$. Let $\mathbf{y}=(\mathbf{y}_1,\mathbf{y}_2)$ where $\mathbf{y}_1,\mathbf{y}_2\in \mathbb{F}_q^k$. Next we define two linear transformations $\varphi$, $\psi$ from $\mathbb{F}_q^k$ to $\mathbb{F}_q^k$:
\begin{equation}\label{eq16}
\varphi(\mathbf{x})=\mathbf{x}, \psi(\mathbf{x})=\mathbf{x}M.
\end{equation}
Then
\begin{equation}\label{eq17}
E_3=\{(\mathbf{x},\varphi(\mathbf{x}))|\mathbf{x}\in\mathbb{F}_q^k \}, E_4=\{(\mathbf{x},\psi(\mathbf{x}))|\mathbf{x}\in\mathbb{F}_q^k \}.
\end{equation}
Let
\begin{equation}\label{eq18}
\begin{aligned}
S&={\rm{Span}}\{H(\mathbf{y},E_1)\cup H(\mathbf{y},E_2)\}\\
&=\{(\alpha,\beta)|\alpha,\beta\in\mathbb{F}_q^k,\alpha \mathbf{y}_1^T=0,\beta \mathbf{y}_2^T=0 \}\\&=\{(\alpha,\beta)|\alpha\in H(\mathbf{y}_1),\beta\in H(\mathbf{y}_2)\}.
\end{aligned}
\end{equation}
Then dim$S=2k-2=m-2.$ If $H(\mathbf{y},E_3)\subseteq S$ and $H(\mathbf{y},E_4)\subseteq S$, then by (\ref{eq18}), there exists $\alpha_1,\cdots,\alpha_{k-1}\in H(\mathbf{y}_1)$, $\beta_1,\cdots,\beta_{k-1}\in H(\mathbf{y}_2)$ such that $(\alpha_1,\beta_1),\cdots,(\alpha_{k-1},\beta_{k-1})$ is a basis of $H(\mathbf{y},E_3)$. Note that $H(\mathbf{y},E_3)\subseteq E_3$, by (\ref{eq17}), we can get $\varphi(\alpha_i)=\beta_i$. Next we prove the linear independence of $\alpha_1,\cdots,\alpha_{k-1}$. Assume $a_1\alpha_1+\cdots+a_{k-1}\alpha_{k-1}=0$, $a_i\in\mathbb{F}_q$,  we have
 $$
 a_1(\alpha_1,\beta_1)+\cdots+a_{k-1}(\alpha_{k-1},\beta_{k-1})
 =(\sum_{i=1}^{k-1}a_i\alpha_i,\varphi(\sum_{i=1}^{k-1}a_i\alpha_i))=(\mathbf{0},\mathbf{0}).
 $$
Hence $a_i=0$,where $1\leq i\leq k-1$. So $\alpha_1,\cdots,\alpha_{k-1}$ is linearly independent. Since $\beta_i=\varphi(\alpha_i)$,
then $\beta_1,\cdots,\beta_{k-1}$ is also linearly independent. Note that ${\rm{dim}}H(\mathbf{y}_1)=k-1$ and ${\rm{dim}}H(\mathbf{y}_2)=k-1$, then
$$H(\mathbf{y}_1)=\mathrm{Span}(\{\alpha_1,\cdots,\alpha_{k-1}\}),H(\mathbf{y}_2)=\mathrm{Span}(\{\beta_1,\cdots,\beta_{k-1}\}).$$
Thus $ \varphi(H(\mathbf{y}_1))=H(\mathbf{y}_2)$. Similarly, we can get $\psi(H(\mathbf{y}_1))=H(\mathbf{y}_2)$. Then we have
$$\psi(H(\mathbf{y}_1))=H(\mathbf{y}_2)=\varphi(H(\mathbf{y}_1))=H(\mathbf{y}_1).$$
That is to say, $H(\mathbf{y}_1)$ is the $\psi$-$\text{invariant subspace}$ of $\mathbb{F}_q^k$.

Let $\alpha_1,\cdots,\alpha_{k-1},\alpha_k$ is a basis of $\mathbb{F}_q^k$, where $\alpha_1,\cdots,\alpha_{k-1}$ is a basis of $H(\mathbf{y}_1)$. Then the matrix of $\psi$ with respect to this basis is
 $$
 B=
 \left(
 \begin{array}{cc}
 B_{1} & B_{2} \\
 \mathbf{0} & b
 \end{array}
 \right).
 $$
where $B_{1}$ is the matrix of $\psi|H(\mathbf{y}_1)$ with respect to $\alpha_1,\cdots,\alpha_{k-1}$.
Note that $M$ is the matrix of $\psi$ with respect to the standard basis, thus $M$ and $B$ are similar, they have the same characteristic polynomial. So $$f(x)=|xI-B_1|(x-b),$$
a contradiction with the irreducibility of $f(x)$. Hence, $H(y,E_3)\nsubseteq S$ or $H(\mathbf{y},E_4)\nsubseteq S$. It is easy to see that $r(H(\mathbf{y},E_1)\cup H(\mathbf{y},E_2)\cup H(\mathbf{y},E_3))=2k-1$ or $r(H(\mathbf{y},E_1)\cup H(\mathbf{y},E_2)\cup H(\mathbf{y},E_4))=2k-1$.
So ${\rm{dim}}V(\mathbf{y},D)=2k-1=m-1$.

In conclusion, for any $\mathbf{y}\in \mathbb{F}_q^m\backslash \{\mathbf{0}\}$, ${\rm{dim}}V(\mathbf{y},D)=m-1$. By Lemma \ref{lem23}, $\mathcal{C}(D)$ is minimal.
\end{proof}
Combining Theorem \ref{thm34} and Lemma \ref{lem24}, we have
\begin{cor}
Let $s\geq 4$ and $\Omega=\{E_1,\cdots,E_s\}$ be a partial spread of $\mathbb{F}_q^m$. If $\{E_1,E_2,E_3, E_4\}$ are defined as (\ref{eq12}), then $\mathcal{C}(D)$ is  minimal.
\end{cor}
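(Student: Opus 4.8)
The plan is to reduce the claim to Theorem~\ref{thm34} by a monotonicity argument, namely the observation (Lemma~\ref{lem24}) that enlarging the defining multiset of a minimal linear code while keeping its rank equal to $m$ preserves minimality. Put $D_0=\bigcup_{i=1}^{4}(E_i\backslash\{\mathbf{0}\})$, the multiset attached to the four subspaces $E_1,E_2,E_3,E_4$ of (\ref{eq12}) viewed as the partial spread (\ref{eq13}), and let $D=\bigcup_{i=1}^{s}(E_i\backslash\{\mathbf{0}\})$ be the multiset attached to $\Omega=\{E_1,\dots,E_s\}$ as in (\ref{eq2}). Since $s\geq 4$ and $\Omega$ contains $E_1,\dots,E_4$, we have $D_0\subseteq D$; and since the members of a partial spread meet pairwise only in $\mathbf{0}$, both are in fact sets, so this inclusion is literal.

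First I would verify the rank hypotheses of Lemma~\ref{lem24}. From the explicit form in (\ref{eq12}), $E_1=\{(\mathbf{x},\mathbf{0}):\mathbf{x}\in\mathbb{F}_q^k\}$ and $E_2=\{(\mathbf{0},\mathbf{x}):\mathbf{x}\in\mathbb{F}_q^k\}$, so $E_1+E_2=\mathbb{F}_q^m$; since $(E_1\cup E_2)\backslash\{\mathbf{0}\}\subseteq D_0\subseteq D$, it follows that $r(D_0)=r(D)=m$. Next, Theorem~\ref{thm34} applies verbatim to the partial spread (\ref{eq13}) and its associated multiset, which is exactly $D_0$, so $\mathcal{C}(D_0)$ is minimal. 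Finally, Lemma~\ref{lem24} with $D_1=D_0$ and $D_2=D$ gives that $\mathcal{C}(D)$ is minimal, which is the assertion.

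I do not expect any genuine obstacle here: the corollary simply packages Theorem~\ref{thm34} through Lemma~\ref{lem24}, and the only points requiring attention are cosmetic---matching the symbol ``$D$'' of Theorem~\ref{thm34} (which there denotes the multiset of the four-element partial spread) with the $D_0$ above, and noting that no hypothesis on the extra subspaces $E_5,\dots,E_s$ is needed beyond $\Omega$ being a partial spread, precisely because Lemma~\ref{lem24} is indifferent to how the multiset is enlarged. Alternatively, one could bypass Lemma~\ref{lem24} and argue directly from Lemma~\ref{lem23} that $\dim V(\mathbf{y},D)=m-1$ for every nonzero $\mathbf{y}\in\mathbb{F}_q^m$, reusing the case analysis from the proof of Theorem~\ref{thm34} and simply discarding the contributions of $E_5,\dots,E_s$; but the route through Lemma~\ref{lem24} is much shorter and I would prefer it.
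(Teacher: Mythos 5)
Your proof is correct and is exactly the paper's argument: the paper derives this corollary by ``combining Theorem~\ref{thm34} and Lemma~\ref{lem24}'', i.e.\ applying the monotonicity lemma to the inclusion of the four-subspace multiset into $D$. Your additional verification that $r(D_0)=r(D)=m$ via $E_1+E_2=\mathbb{F}_q^m$ is a detail the paper leaves implicit.
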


\section {\bf{Concluding remarks}}\label{section Concluding remarks}
In this paper, we use the geometric approach to
study the minimality of linear codes constructed from partial spreads in all cases. In \cite{LYT2022}, they assume that $T_0$ is a $\mathbb{F}_p-$subspace of $\mathbb{F}_p^m$ with dim$_{\mathbb{F}_p}{T_0}$=$t_0$, $l=k+t_0$ and $$\Omega=\{E_i\leq \mathbb{F}_p^m:\ {{\rm{dim}}_{\mathbb{F}_p}}E_i=l,\ E_i\cap E_j=T_0,\ 1\leq i\neq j\leq s\}.$$ In \cite[Theorem 10]{LYT2022}, they prove that when $s=\#\Omega>p$, the corresponding linear code is minimal. Our results in this paper generalize the special case in \cite[Theorem 10]{LYT2022} when $t_0=0$ to all $s\geq 2$. When $t_0\neq 0$, the problem becomes complicated and we will consider it in the future.

 {}
\end{document}